\documentclass{article}
\usepackage{ijcai17}
 


\newcount\Comments  
\Comments = 0
\newcommand{\kibitz}[2]{\ifnum\Comments=1{\color{#1}{#2}}\fi}
\newcommand{\hma}[1]{\kibitz{blue}{[HM: #1]}}
\newcommand{\rmr}[1]{\kibitz{red}{[RESHEF: #1]}}
\newcommand{\vrb}[1]{\kibitz{cyan}{[Valentin: #1]}}

\newcount\PGE  
\PGE  = 0
\newcommand{\PGEcommand}[2]{\ifnum\PGE=1{\color{#1}{#2}}\fi}

\newcount\Version  
\Version  = 1
\newcommand{\ver}[2]{\ifnum\Version=1{#1}\fi \ifnum\Version=2{#2}\fi}


\def\newpar{\vspace{-0mm}\paragraph}

\newcommand{\calS}{\mathcal{S}}

\def\ol{\overline}

\usepackage{times}
\usepackage{graphicx}
\usepackage{pgfplots}
\usepackage{enumitem}
\usepackage{etex}
\usepackage{latexsym}
\usepackage{amssymb}
\usepackage{amsmath}
\usepackage{amsfonts}
\usepackage{bbm}
\usepackage{ifthen}
\usepackage{psfrag}
\usepackage{floatflt}
\usepackage{verbatim}
\usepackage{scalefnt}
\usepackage{epsfig}
\usepackage{epstopdf}
\usepackage{color} 
\usepackage{xcolor}
\usepackage{calc}
\usepackage{graphicx} 
\usepackage{subfig}
\usepackage{multicol}
\usepackage{tikz}
\usepackage{amsthm}	

\usepackage{url}
\usetikzlibrary{arrows,decorations,decorations.shapes,decorations.markings,decorations.pathreplacing,backgrounds,shapes,petri,topaths}


%

\newtheorem{theorem}{Theorem}

\newtheorem{proposition}[theorem]{Proposition}

\theoremstyle{definition}
\newtheorem{exmp}{Example}
\newenvironment{rtheorem}[1]{\medskip\noindent\textbf{Theorem~\ref{#1}.}\begin{itshape}}{\end{itshape}}

\newenvironment{rproposition}[1]{\medskip\noindent\textbf{Proposition~\ref{#1}.}\begin{itshape}}{\end{itshape}}

\DeclareMathOperator{\argmin}{argmin}
\DeclareMathOperator{\argmax}{argmax}


\title{Contract Design for Energy Demand Response}

\author{Reshef Meir\\
Technion -- Israel Institute of Technology\\
Haifa, Israel\\
reshefm@ie.technion.ac.il\\
\And
Hongyao Ma\\
Harvard University\\
Cambridge, MA, US\\
hma@seas.harvard.edu\\
\And
Valentin Robu\\
Heriot-Watt University\\
Edinburgh, UK\\
V.Robu@hw.ac.uk\\
}

\begin{document}

\maketitle

\begin{abstract}
Power companies such as Southern California Edison (SCE) uses Demand Response (DR) contracts to incentivize consumers to reduce their power consumption during periods when demand forecast exceeds supply. 
Current mechanisms in use offer contracts to consumers independent of one another, do not take into consideration consumers' heterogeneity in consumption profile or reliability, and fail to achieve high participation. 

We introduce DR-VCG, a new DR mechanism that offers a flexible set of contracts (which may include the standard SCE contracts) and uses VCG pricing. We prove that DR-VCG elicits truthful bids, incentivizes honest preparation efforts, enables efficient computation of allocation and prices. With simple fixed-penalty contracts, the optimization goal of the mechanism is an upper bound on probability that the reduction target is missed.\hma{Reviewer 1 asked us to clarify reliability guarantees in the abstract}\rmr{ok?} Extensive simulations show that compared to the current mechanism deployed in by SCE, the DR-VCG mechanism achieves higher participation,  increased reliability, and significantly reduced total expenses. 

\end{abstract}

\section{Introduction}
Power system operation involves many challenges, driven by the requirement that supply equals demand at all times. Too much supply may lead to overload on the grid, whereas excessive demand may lead to shortages and blackouts. The problem is aggravated by the fact that consumption tends to vary sharply due to certain events (for example, surges in consumption during heatwaves), whereas increasing the supply is typically slow and costly. Even if the power company wants to shift some of the demand to a different time, it cannot coerce the consumers to do so, and may only affect their behavior by using monetary incentives, such as increasing electricity price during peak-demand times. As in other markets, consumers  may respond to incentives in different ways based on their own preferences. 
Unlike some other markets, the serious consequences of failure to meet demand, and the large uncertainty about how consumers may react to incentives, requires the power company to guarantee there is enough slack on the supply side.


 \hma{This feels like a lot of words to express the idea that market failure has big consequences. I think the point that we need to make here is ``since market failure has huge consequences, sufficient slack need to maintained in the system (i.e. reserves on the supply side) which is very costly. And we also need to mention that there is large uncertainty in the demand side's ability to reduce consumption. (It's not obvious from the abstract or the rest of the intro, and otherwise the problem would be trivial.)} \rmr{better?}

\newpar{The DR-SCE mechanism}
Demand Response (DR) programs are used by power companies to handle surges in demand by reducing consumption rather than increasing production. Typically, when a surge is predicted one day ahead of time, the company lets consumers bid on how much consumption they can reduce. \hma{elicits bids from consumers on the amount of consumption they plan on reducing?} Each consumer is being paid a fixed $\$0.5$ per reduced kWh, but only if the reduced consumption is between 50\% and 150\% of her bid. We call this system the \emph{DR-SCE mechanism}, as it is used by  Southern California Edison (c.f.~\cite{DBP_eval}) (as well as other companies such as PG\&E~\cite{DBP_report}). \hma{add parenthesis around ``as well as ..."? since we were describing why it's named SCE?}

DR-SCE has several shortcomings: 
incentives for participation are often insufficient (only 12\% of registered participants in 2012-2013 submitted any bids)\hma{is there a number on the percentage of consumers that registered?}, the system does not capture the very different consumption profiles of consumers, and does not filter out unreliable \hma{again I think uncertainty should be mentioned earlier, otherwise it's not clear what ``unreliable" means here} bidders. Yet, being a widely deployed DR system, we treat DR-SCE both as a starting point and as a benchmark for new mechanisms.  \hma{the different consumption profiles are mentioned many times. I'm just wondering if it would be good to mention a couple of short examples to show some agents are inherently less reliable than some others?}\rmr{a good idea but no time or space}

\rmr{Based on the evaluation report, the main problem is engagement: only 12\% of target clients actually bid. Participants said that it was not ``worth the effort`` and that ``preparation is too costly'' (page 30-31, 36). Our mechanism allows bidders to demand compensation based on their true effort (and if it is too much, they will simply not be selected). There is also a difference between ``sophisticated'' agents (who tend to be active and monitor their own actions) and ``simple'' agents who are either inactive or just bid the default.  }

\newpar{Contribution} 
We propose a novel DR-VCG mechanism for selecting and incentivizing a subset of consumers to reduce consumption. The grid offers a set of contracts defined by some desired reduction target and a penalty scheme, and agents may bid how much they want to get paid on for accepting each contract. The mechanism then selects a subset of contracts that minimizes the \emph{sum of bids}, and applies VCG prices to pay the agents. As a result, it is a dominant strategy for all agents to bid their true costs. 

We show that for natural penalty schemes, the sum of bids is a good proxy for the reliability of the joint contract, as high bids are indicative of low individual reliability. We show that the current contracts \hma{based on fixed per unit reward?} used by SCE and PG\&E can still be offered under DR-VCG (to allow for easy transition and backward-compatibility). We demonstrate via examples and simulations that even when restricted to offering SCE-like contracts, DR-VCG dominates DR-SCE in terms of reliability and grid expense. \rmr{removed: We also highlight the advantage of using more general \emph{interval contracts} which are simple and intuitive, enable efficient computation of VCG prices, and capture other considerations important to the grid.}
\ver{

This is the full version of a paper accepted to IJCAI-2017~\cite{MMR17}. All omitted proofs are available on Appendix~\ref{apx:proofs}.}{All omitted proofs are available in the full version of this paper~\cite{MMR17_full}.}

\newpar{Related Work}
A number of recent works have discussed how groups of agents can be coordinated and incentivized to shift power demand~\cite{haring_al,Zhang_al,Su_al}. Considering strategic agents, ~\cite{rose_al} and \cite{AkasiadisC13} propose the use of {\em scoring rules} to incentivize truthful reports about expected future generation or consumption. However, scoring rule approaches are not concerned with selection of agents to satisfy a system-wide reliability constraint. 
\cite{NaLi_al} consider agents bidding using supply curves, and study the market equilibria for this setting. They do not, however take a mechanism design perspective or try to guarantee truthfulness. 
Mechanism design approaches for aggregating load apply either variations of VCG~\cite{samadi2012advanced,chapman_verbic}, or a new  ``staggered clock-proxy" auction~\cite{nekouei_al}. Neither work considers the crucial issue of \emph{reliability}, i.e. in practice not all agents selected to respond will do so.
 A different variation of VCG pricing was used by \cite{porter2008fault} to align the incentives of agent in face of possible failures in general mechanisms. However their version requires full revelation (which is problematic in practical DR programs), and is aimed at maximizing social welfare rather than reliability.  \rmr{added}

The work closest to ours is \cite{Ma_ijcai16}, who propose a mechanism that allows agents to bid on the maximal penalty for failing the DR contract, while \cite{ma_AAMAS17} extend this work to include uncertainty about costs. Our work takes a different approach, which generalizes currently used contracts and, in our view, is more geared to practical applications. \rmr{Hongyao want to edit this?}

\section{Model} 

We consider a single power utility or system operator (henceforth, the \emph{grid}), and a set $N$ of consumers (or \emph{agents}) who are registered to a demand response program.
\hma{This would probably be too big of a change at this moment, but I feel that it would be  easier for people to understand if we talk about the multi-period model, agents' types and beliefs etc earlier when talking about the model, so when people read about the contracts they would be thinking about what agents' preferences / behavior would be under each of them. How about briefly mentioning that each agent can choose from a few levels of preparation efforts a day ahead of time at certain costs, and when the time comes the randomness in her amount of reduction is realized?}

The most important task of the grid is to cut down consumption by at least $M$ energy units (say, $kWh$) during the DR event. Given that this target is met, the grid   would like to minimize payments to the agents. 
The baseline consumption profile of each agent is assumed known from past consumption data, so the grid can measure how much each agent reduced in practice. 

\subsection{Contracts}
A contract in our model is defined by a pair $(\ell,F)$, where $\ell$ is a \emph{commitment goal} in energy units and  $F: \mathbb N \rightarrow \mathbb R_+$ is a penalty function, mapping the realized energy reduction $X$ to a monetary penalty $F(X)$.
A-priori, $F$ is unconstrained, and $\ell$ is merely a non-binding declaration of the agent's intentions. It makes sense to consider more specific classes of penalties that attach the penalty to the commitment goal.
\begin{description}[leftmargin=0.3cm]
	\item[Fixed contracts] A \emph{Fixed penalty contract} \rmr{Capitalize Fixed, Cliff, etc.?} is defined by a pair $(\ell,f_\ell)$, and the penalty is set to $F(X)=f_\ell$ if $X<\ell$ and 0 otherwise. In other words, the agent commits to reduce $\ell$, or otherwise pay a penalty of $f_\ell$.  See Fig.~\ref{fig:contracts}(a).
	\item[Cliff contracts] A \emph{Cliff penalty contract} is defined by a tuple $(\ell,f_\ell,\alpha,\beta)$ where $\alpha<1, \beta>0,f_\ell\geq \ell(1-\alpha)\beta$. It has the following form:
	
	\vspace{-7mm}
	$$~~~~~~~~~~~~~~~F(X) = \left\{\begin{array}{ll}
f_\ell, &X<\alpha\cdot \ell \\
(\ell-X)\beta,  & \alpha\cdot \ell  \leq X < \ell \\ 
0 ,& \ell \leq X
\end{array}
\right.$$

We can think of a Cliff penalty function as a plateau where the penalty is 0 whenever the commitment $\ell$ is met. Failure to meet the goal results in a linear penalty, where beyond a certain point the penalty becomes a  constant, and the utility drops sharply (hence a ``cliff''). See Fig.~\ref{fig:contracts}(b).
\end{description}

\begin{figure}[t] 
\centering
\subfloat[]{ 
\begin{tikzpicture}[scale=1]
\begin{axis}[width=3cm,height=3cm, enlarge x limits = -1,xticklabels={,$0$,,$\ell$,,},yticklabels={,$-f_\ell$,,$0$},xmin=-0.5,xmax=6,ymin=-5,ymax=0.9]
    \addplot[thick,domain=0:4] {-4}; 
		  \addplot[thick,domain=4:6] {0};
		\end{axis}
\end{tikzpicture}
}
\subfloat[]{ 
\begin{tikzpicture}[scale=1]
\begin{axis}[width=3.5cm,height=3cm, enlarge x limits = -1,xticklabels={,$0$,$\alpha \ell$,$\ell$,,},yticklabels={,$-f_\ell$,,$0$},xmin=-0.5,xmax=6,ymin=-5,ymax=0.9]
    \addplot[thick,domain=0:2] {-4}; 
		  \addplot[thick,domain=2:4] {x-4 };
			\addplot[thick,domain=4:6] {0};
		\end{axis}
\end{tikzpicture}
}
\subfloat[]{
\begin{tikzpicture}[scale=1]
\begin{axis}[width=4cm,height=3cm, enlarge x limits = -1,xticklabels={,$0$,$\ell$,,,},yticklabels={,,,$0$},xmin=-0.5,xmax=6,ymin=-5,ymax=0.9]
    \addplot[thick,domain=0:1] {-1-(1-x)*2}; 
		\addplot[thick,domain=1:2] {-sqrt(2-x)}; 
		  \addplot[thick,domain=2:4] {0};
			\addplot[thick,domain=4:6] {-0.8-(x-4)};
		\end{axis}
\end{tikzpicture}
}\vspace{-2mm}
\caption{The penalty $-F$ as a function of the realized reduction $X$ under a Fixed contract (a), a Cliff contract (b) and a general contract (c).\label{fig:contracts}\vrb{It is not immediately clear from this figure that case (c) is a generalization of (b), as claimed in the text. May be a trivial point, but the lower cliff penalty level (up to $\alpha\ell$) is missing in (c).}\vspace{-4mm}}
\end{figure}


Clearly any Fixed contract is a Cliff contract. 
As we will later see, the SCE payment scheme can be implemented as a particular Cliff penalty scheme, so even restricting our mechanism to using Cliff contracts is sufficient to generalize the SCE system. 

  %
	%
	%

	\newpar{Optimal contract sets}
	For what follows, we assume no structural restriction on contracts or penalty schemes. We simply assume that a set of $k$ contracts $J$ are offered, and $F(j,X)$ is the penalty for an agent who signs up for contract $j\in J$ and reduces consumption by $X$. Since the value of a contract to an agent is always non-positive, denote by $B_{ij}\geq 0$ the bid of agent $i$ on contract $j$.\footnote{The bid is supposed to reflect the various costs involved in taking the contract: preparation cost, online adjustment costs, the expected penalty and so on. We elaborate on this in the next section.} 
 Then for a subset of contracts $S\subseteq N \times J$, we denote the \emph{sum of bids} by $SB(S) = \sum_{(i,j)\in S}B_{ij}$. 
	
In addition, the grid may pose a restriction on which sets of contracts are valid: we denote by $\calS$ all valid sets of contracts. In this work, we use this constraint to impose a lower bound on  (declared) reduction in consumption, thus 
$$\vspace{-2mm}\calS(M) = \{S : \sum_{(i,j)\in S}\ell_j \geq M \text{ and } \forall i |\{(i,j)\in S\}|\leq 1 \}.$$
 In other words, $\calS(M)$ includes all sets of contracts that \emph{claim} to reduce at least $M$ units of consumption, and each agent has at most one contract. For an agent $i\in N$, we denote $SB_{-i}(S) = \sum_{(i',j)\in S : i'\neq i}B_{i'j}$, that is, the sum of bids over all agents in $S$ except $i$.    We sometimes denote $i\in S$ as a selected agent (meaning there is some $j$ s.t. $(i,j)\in S$).  

An \emph{optimal contract set} is a set of individual contracts that minimizes the sum of bids, i.e. $\argmin_{S\in \calS}SB(S)$.

%
 %
%
%
%
%
%
%
\subsection{The DR-VCG Mechanism}

We define the \emph{DR-VCG mechanism}, for assigning demand response contracts using Vickrey-Clarke-Groves (VCG) payments.   
The grid publishes a finite set of contracts $J$.\footnote{The analysis works also for a variant of the mechanism where agents may propose new contracts. \ver{See Appendix~\ref{apx:variants}.}{}} 
  Each agent $i$ submits a single bid $B_{ij}$ on each contract $j$. For now, we can think of $B_{ij}$ as some proxy of the cost required from $i$ when taking on contract $j$.
The mechanism finds the optimal valid set of contracts by solving $\min_{S \in \calS }  SB(S)$. 
	
For a set of agents $N$, a set of contracts $J$, and a reduction goal $M$, we can plug in our more specific optimization goal and constraints.  We get the optimal subset of contracts as
{\small $$\vspace{-2mm}
	S^*(N,J,M) = \argmin_{S \in \calS(M) }\{  \sum_{(i,j) \in S}  B_{ij} \}.
$$
}

We denote $SB^*(N,J,M) = SB(S^*(N,J,M))$, and omit some of the parameters when they are clear from the context. 
Then, for each selected agent $i\in S^*$, the individual rewards are computed as the VCG payments with the Clarke pivot rule~\cite{clarke1971}. Informally, the sum of bids is analog of the \emph{social cost}, and the VCG payment is the positive externality the agent's presence has on the rest of the agents. Formally, for each $i\in N$,
	$$r_i=SB^*(N_{-i},J) - SB^*_{-i}(N,J).$$ 
	The reward is paid to the agent up front, regardless of how much reduction it eventually achieves in practice.  
	
	Finally, for each $(i,j)\in S^*(N,J)$, the selected agent $i$ pays $F(j,X_i)$ to the grid, where each $X_i$ is the realized reduction of agent $i$.
	The utility of agent $i$ depends on the reward, the penalty, and the investment costs required to meet the contract. We analyze agents' incentives and utilities in the next section.

\rmr{here we assume that selected agents are always prompted to reduce consumption. If this is not the case, we only need to suppose that agents know they will be prompted with some fixed probability $q$ (if selected).}
\vrb{This is fine, we make the same assumption as in the other paper, as this is a public probability, known by all.}

\begin{exmp}[Running example] \label{ex:3_bidders}
Suppose we apply the DR-VCG mechanism with a single fixed contract $(\ell=100,f_\ell=50)$. There are three agents that submit bids of $B_1=0,B_2=5,B_3=15$, and the goal for the grid is set to $M=200$. 

The optimal set of contracts that meets $M=200$ is $S^*=\{1,2\}$ with $SB^*=5$. The rewards are:
\vspace{-1mm}
{\small$$r_1 = SB^*(\!\{2,3\}\!)-SB^*_{-1}(\!\{1,2\}\!)\!=\!20-5\!=\!15;\ r_2\!=\!15-0\!=\!15,\vspace{-1mm}$$}
 so in total the grid pays $30$ (some of which it might get back as penalties).  
\end{exmp}  

\hma{I don't think this example adds much to the paper since people know how to compute the VCG payments anyway? The computation can be done with little additional space in Example 2. I also feel it would be nice to write down $S^\ast(N \backslash{1}) = 20$ and $S^\ast(N \backslash{2}) = 15$ if we are going to do the computation?} \rmr{not all people (or reviewers) are like you...} \hma{okay, then there is even more incentive to mention $S^\ast(N \backslash \{1\}) = 20$ and $S^\ast(N \backslash \{2 \}) = 15$?}

\section{Analysis}
\paragraph{Complexity}
In order to run the DR-VCG mechanism, we should be able to efficiently compute the optimal contract set and prices.
Suppose that energy units (including the reduction goal $M$) are integers. 
\begin{theorem}\label{th:complex}For any sets of agents $N$ and Cliff contracts $J$, both of $S^*(N,J)$ and $SB^*(N,J)$  (and thus also VCG prices) can be computed in time polynomial in $n,k,M$. 
\end{theorem}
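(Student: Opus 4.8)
The plan is to reduce the computation of $S^*$ and $SB^*$ to a pseudo-polynomial dynamic program, in the spirit of the classical knapsack DP. The first observation is that the allocation problem $\min_{S\in\calS(M)}SB(S)$ depends only on the submitted bids $B_{ij}$ and on the commitment goals $\ell_j$ — the penalty functions $F(j,\cdot)$ play no role in determining $S^*$ or the VCG prices, so nothing about Cliff contracts is used beyond the standing assumption that the energy units, hence the $\ell_j$ and $M$, are integers. The constraint $\calS(M)$ asks for a partial assignment of agents to contracts (each agent to at most one contract) whose declared reductions sum to at least $M$; this is exactly a multiple-choice knapsack-type covering problem with ``capacity'' $M$, which is solvable in time pseudo-polynomial in $M$.

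Concretely, I would define $c(i,m)$ to be the minimum sum of bids over all valid sets $S\subseteq\{1,\dots,i\}\times J$ whose total declared reduction $\sum_{(i',j)\in S}\ell_j$ is at least $m$, with $c(i,m)=+\infty$ if no such set exists; here $i$ ranges over $\{0,1,\dots,n\}$ and $m$ over $\{0,1,\dots,M\}$. The base case is $c(0,0)=0$ and $c(0,m)=+\infty$ for $m\geq1$, and the recurrence is
$$
c(i,m) \;=\; \min\Bigl\{\, c(i-1,m)\,,\;\; \min_{j\in J}\bigl(\, B_{ij}+c\bigl(i-1,\max(0,\,m-\ell_j)\bigr)\bigr)\Bigr\},
$$
where the first term corresponds to leaving agent $i$ unassigned and the second to assigning $i$ to contract $j$, truncating the residual demand at $0$ since over-achieving the target is allowed. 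Correctness follows by a routine induction on $i$; then $SB^*(N,J,M)=c(n,M)$, and $S^*(N,J,M)$ is recovered by the standard backtracking through the table, storing at each cell the choice attaining the minimum. The table has $O(nM)$ entries, each computed in $O(k)$ time, for an $O(nkM)$ bound.

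For the VCG prices $r_i=SB^*(N_{-i},J)-SB^*_{-i}(N,J)$, the term $SB^*(N_{-i},J)$ is obtained by rerunning the same DP on the agent set $N\setminus\{i\}$ — one run per agent, $O(n^2kM)$ in total — while $SB^*_{-i}(N,J)$ equals $SB^*(N,J)$ minus agent $i$'s own bid in $S^*(N,J)$ (or $SB^*(N,J)$ itself when $i\notin S^*$), which is read off in constant time once $S^*(N,J)$ and its assignment are known. Hence all prices are computed in time polynomial in $n,k,M$ as well.

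I do not expect a genuine obstacle: the statement is essentially the pseudo-polynomial knapsack DP adapted to the multiple-choice, ``cover at least $M$'' variant. The only points needing care are (i) truncating the residual demand at $0$ so that feasible sets exceeding the target are still counted; (ii) propagating $+\infty$ to signal that $\calS(M)$ is empty, so that the mechanism correctly reports infeasibility; and (iii) stating explicitly that the integrality of $M$ and the $\ell_j$ is precisely what bounds the $m$-index by $M$ and thus keeps the running time polynomial in the stated parameters.
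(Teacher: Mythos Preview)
Your proposal is correct and takes essentially the same route as the paper: a pseudo-polynomial knapsack-style dynamic program over agents and residual demand, with backtracking to recover $S^*$. The only difference is cosmetic---the paper indexes its table by the \emph{exact} total reduction and then minimizes over $m\geq M$ at the end, whereas your ``at least $m$'' formulation with truncation at $0$ handles overshooting more directly; you also make the VCG-price computation (rerunning the DP with each agent removed) explicit, which the paper leaves implicit.
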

 In the general case, finding an optimal set of contracts is NP-hard even for fixed contracts, by a  reduction from the Knapsack problem~\cite{karp1972reducibility} (details omitted).

However, the knapsack problem is solvable by dynamic programming when the units are \emph{bounded integers}, and a similar algorithm can be applied to our problem (intuitively, compute dynamically the optimal contract sets for agents $\{1,\ldots,i\}$ for $i=1,2,\ldots,n$). Hence we get Theorem~\ref{th:complex}.

%


\subsection{Incentives}
To make things concrete we will describe a particular probabilistic model from which we can derive agents' costs, and will show how under this model the incentives of the agents align nicely with those of the grid.  \hma{hmm, I guess we would like to convey our contracts align the incentives of agents with that of the grid? In general agents don't want to work and want to get paid so incentives are pretty much mis-aligned...?} 
However the claim the agents' dominant strategy is to reveal their true costs does not depend on this interpretation, and holds whenever agents can attribute a well-defined  cost to each contract. 
 
\newpar{Effort and types}
In general, agents do not know with certainty the amount of energy they will be able to reduce, as this depends on some unknown factors such as urgent service orders, last minute clients and so on. Moreover, preparation may have some cost (e.g. due to changing the work schedule, or turning down orders). By investing a higher effort/cost, an agent might be able to commit to saving more energy. 

\hma{Hmm, again, I think the discussion on uncertainty and types should either be mentioned in the intro or earlier on in e.g. Section 2...}

The \emph{type} of each agent $i$ is given by a distribution $p_i$. In detail,  $p_i(c,X)$ is the probability that by investing $c$, agent $i$ will reduce exactly $X$ units of consumption (thus $\sum_{X\geq 0} p_i(c,X)=1$ for all $c$).
\hma{I'm wondering if it would be good to mention that $c$ can take infinite number of real values, or we an also assume that for each agent there is a set of finite number of $c$'s that the agent can choose from...}\rmr{allow negative $X$? or just assume $X=0$ whenever demand is $\geq$ usual demand?}\hma{This is non-negative by definition? ``baseline consumption profile"  was mentioned in the paragraph before 2.1 Contracts and seems higher than baseline consumption was not allowed?}\footnote{Investment $c$ may include preparation costs,  on-line actions required to produce the energy cut $X$, opportunity cost, and so on.} 
We assume  agents are always trying to maximize their utility.  

A straightforward approach would be to ask agents to report their types (and then apply some version of VCG). However, a language to report an arbitrary distribution may be very complicated. 
Further, unsophisticated agents like small households may not  know their own distribution (or even \emph{what is} a distribution). \hma{hmm could we cut the last piece... I  know people don't understand probability though.} 
Fortunately, our DR-VCG does not require the agents to report any such distribution. 

\hma{ Rewrote the following paragraph.}

Fix an agent $i$ who accepted a contract $j$ (with penalty scheme $F$) and gets paid reward $r_i$. If she decides to invest $c$, she will pay an expected penalty of $EF_i(j,c) = E_{X \sim p_i(c)}[F(j,X)]$, and her  expected utility would be:
$$
	u_i(j,c)  =  r_i - c - EF_i(j,c_i)=r_i - c -E_{X\sim p_i(c)}[F(j,X)]. \vspace{-1mm}
$$
Therefore, the \emph{optimal investment} an utility maximizing agent should make for contract $j$ should be
$$
	c^*_i(j)= \argmax u_i(j,c) = \argmin_{c\geq 0}(c+EF_i(j,c)). \vspace{-1mm}
$$
In words, when agent $i$ is signed up for contract $j$, investing $c^*_i(j)$ will minimize her total cost (investment + penalty). We denote this cost by $C^*_i(j)= c^*_i(j) + EF_i(j,c^*_i(j)) = \min_{c\geq 0}(c+EF_i(j,c))$.
We refer to $C^*_i:J \rightarrow \mathbb R^+$ as the \emph{cost type} of agent $i$, which is derived from her type $p_i$ and $F$. 

\rmr{old paragraph:
Given a penalty scheme $F$, 
If agent $i$ with contract $j$ decides to invest $c$, the expected penalty it is going to pay would be $EF_i(j,c)=E_{X\sim p_i(c)}[F(j,X)]$. 
Thus for any pair $(i,j)$ we can attach an \emph{optimal investment} $c^*_i(j)=\argmin_{c\geq 0}(c+EF_i(j,c))\geq 0$. \hma{``an rational agent would decide on investing ..."?} In words, when agent $i$ is signed up for contract $j$, investing $c^*_i(j)$ will minimize her total cost (investment + penalty). We denote this cost by $C^*_i(j)= c^*_i(j) + EF_i(j,c^*_i(j)) = \min_{c\geq 0}(c+EF_i(j,c))$.
We refer to $C^*_i:J \rightarrow \mathbb R^+$ as the \emph{cost type} of agent $i$, which is derived from her type and $F$. 

The \emph{expected utility} of agent $i$ of type $p_i$ from accepting contract $j$ for reward $r_i$, and  then investing $c_i$ is 
$$u_i(j,c_i)  =  r_i - c_i - EF_i(j,c_i)=r_i - c_i -E_{X\sim p_i(c_i)}[F(j,X)].$$  }

The \emph{total expense} (TE) of the mechanism can be computed as the sum of rewards paid to the agents minus expected penalties\hma{collected from them?}: $TE(S) = \sum_{(i,j)\in S} (r_i - EF_i(j,c_i^*(j))$  (assuming agents invest optimally). 

A mechanism is \emph{truthful} if it is a dominant strategy for any agent $i$ to report her true cost $C^*_i(j)$ on any contract $j$. \rmr{the exact set of contracts depends on the variant of the mechanism} 

A mechanism is \emph{individually rational} (IR) if for every pair  $(i,j)$ selected by the mechanism, $u_i(j,c^*_i(j))\geq 0$  (that is, by participating each agent does not lose in expectation). 

\begin{theorem}\label{th:auction}
Consider  DR-VCG  with arbitrary $N,J,M$.
\begin{enumerate}
	\item For every contract $j\in J$, it is a dominant strategy for agent $i$ to bid   $B_{ij}=C^*_{i}({j})$;
	\item If contract $(i,j)$ is selected, it is a dominant strategy for $i$ to invest $c^*_{i}(j)$; 
	\item The mechanism is IR.	
\end{enumerate}
\end{theorem}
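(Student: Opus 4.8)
The plan is to treat this as a standard VCG-soundness argument, adapted to the two-stage structure (first accept a contract via a bid, then choose an investment level). The three parts are logically ordered: part 2 is the innermost fact, part 1 relies on the per-contract cost $C^*_i(j)$ being well-defined (which is exactly what part 2 gives us), and part 3 follows once truthfulness and optimal investment are in place.

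First I would dispatch part 2. Fix a selected pair $(i,j)$ and observe that the reward $r_i$ is paid up front and does not depend on the realized reduction $X_i$; hence once the contract is fixed, the agent's investment decision decouples entirely from the mechanism and from the other agents. Her expected utility as a function of the chosen investment $c$ is $u_i(j,c) = r_i - c - EF_i(j,c)$, so maximizing it is the same as minimizing $c + EF_i(j,c)$, whose minimizer is by definition $c^*_i(j)$. This is immediate from the definitions already set up in the excerpt, so it is really just an unwinding.

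Next, part 1. Here I would invoke the classical fact that the Clarke-pivot VCG payment rule is dominant-strategy truthful for any social-cost-minimization problem, applied to our setting where the "social cost" of an outcome $S$ is $SB(S) = \sum_{(i,j)\in S} B_{ij}$ and the feasible outcomes are $\calS(M)$. The key observation making this legitimate is that, by part 2, when agent $i$ is assigned contract $j$ she incurs a true (expected) cost of exactly $C^*_i(j)$ — independent of anything reported by others — so her true valuation over outcomes is additive with per-item cost $C^*_i(j)$, precisely the form VCG requires. Concretely, I would write $i$'s utility when she reports $B_i = (B_{ij})_j$, others report $B_{-i}$, and she ends up assigned contract $j(B_i,B_{-i})$ (or unassigned), as
$$
u_i = \big(SB^*(N_{-i},J) - SB^*_{-i}(N,J)\big) - C^*_i(j),
$$
drop the term $SB^*(N_{-i},J)$ that does not depend on $B_i$, and note that maximizing the rest is equivalent to minimizing $SB^*_{-i}(N,J) + C^*_i(j) = SB_{-i}(S^*) + C^*_i(j)$ over $i$'s possible reports; since the mechanism with $B_{ij}=C^*_i(j)$ literally selects the $S\in\calS(M)$ minimizing $SB_{-i}(S) + C^*_i(j(S))$, truthful bidding attains the minimum. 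One subtlety to handle cleanly: an agent might be unassigned, in which case her utility is $0$; this is the "empty" outcome and must be included among the outcomes she is effectively optimizing over, but $\calS(M)$ as defined does allow sets in which $i$ does not appear (as long as $M$ is still met), so the argument goes through — I should note that if no feasible $S$ exists at all the mechanism is vacuous.

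Finally, part 3 (IR). For a selected pair $(i,j)$, combining parts 1 and 2 the realized expected utility is $r_i - C^*_i(j) = SB^*(N_{-i},J) - SB^*_{-i}(N,J) - C^*_i(j)$. Now $SB^*_{-i}(N,J) + C^*_i(j) = SB(S^*)$ is the sum of bids of the chosen feasible set including $i$'s true cost, whereas $SB^*(N_{-i},J)$ is the optimum over feasible sets that exclude $i$; since $S^*$ with agent $i$ removed is either still feasible or not, a short monotonicity argument gives $SB^*(N_{-i},J) \ge SB(S^*) - C^*_i(j)$... I need to be a little careful here because removing $i$ can make the set infeasible (it may drop below $M$), so the clean statement is rather that the optimum over $\calS(M)$ only weakly increases when the agent set shrinks, i.e. $SB^*(N_{-i},J) \ge SB^*(N,J) = SB(S^*) = SB_{-i}(S^*) + C^*_i(j)$ — using here that $C^*_i(j)\ge 0$ and that any feasible set for $N_{-i}$ is also feasible for $N$. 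Hence $u_i \ge 0$.

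The main obstacle I anticipate is not any single deep step but getting the bookkeeping around infeasibility exactly right: $\calS(M)$ is nonempty only when enough commitment capacity exists, and removing an agent in the VCG computation ($N_{-i}$) can in principle make the residual problem infeasible, in which case $SB^*(N_{-i},J)$ must be read as $+\infty$ (or the agent is "essential"). I would state up front that we assume $\calS(M)\neq\emptyset$ for the relevant agent sets, or equivalently treat the objective as $+\infty$ on infeasible instances, so that all the $SB^*(\cdot)$ comparisons used in parts 1 and 3 are valid; with that convention in place everything reduces to the textbook VCG argument plus the decoupling observation of part 2.
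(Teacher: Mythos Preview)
Your proposal is correct. Parts 2 and 3 match the paper's proof essentially verbatim: part 2 is the same one-line unwinding of the definition of $c^*_i(j)$, and part 3 is the same monotonicity observation $SB^*(N_{-i},J)\ge SB^*(N,J)=SB_{-i}(S^*)+C^*_i(j)$ (the paper also notes the complementary case $i\notin S^*$, where utility is zero).

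For part 1 there is a mild difference in framing. You run the direct ``objective alignment'' VCG argument: write $u_i$, drop the $B_i$-independent term $SB^*(N_{-i},J)$, and observe that truthful bidding makes the mechanism minimize exactly $SB_{-i}(S)+C^*_i(j(S))$. The paper instead uses the \emph{agent-independent prices} framing: it defines per-contract payments $t_i^j = SB^*(N_{-i}) - SB^*_{-i}(N_{-i}\cup\{i^j\})$ (the reward $i$ would receive if pinned to contract $j$), shows these do not depend on $i$'s bid, and argues the mechanism is agent-maximizing at those prices (if $i$ strictly preferred $j'$ to her assigned $j$, the mechanism would have assigned $j'$). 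The two arguments are equivalent standard proofs of Clarke-pivot truthfulness; yours is a bit more compact, while the paper's makes the ``market-clearing'' interpretation and the one-contract-per-agent bundle structure more visible. Your explicit handling of the infeasibility corner case (treating $SB^*(N_{-i},J)=+\infty$ when $\calS(M)$ is empty) is a useful addition that the paper leaves implicit.
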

\begin{proof}[Proof sketch] 
Intuitively, we show that VCG payments are market clearing (following similar proofs in other domains, see \cite{nisan2007introduction}), i.e. that no agent prefers a different contract (or no contract) under the given prices \hma{this sounds like agent-independence instead of market clearing?}\rmr{it's all the same}. Since the prices that agent~$i$ faces are independent of her bids, it is a dominant strategy to report truthfully. 
Once  contract $(i,j)$ is selected, then $u_i(j,c)=r_i-c-EF_i({j},c)$.  By definition, this is maximized by investing $c^*_i({j})$.
\end{proof}

\begin{exmp}\label{ex:3_bidders_VCG}
Consider three agents, where each one can reduce consumption by $100$ kWh without effort. However, agents have different reliability and only manage to hold their commitment with a probability $p_i$ of $1, 0.9,$ and $0.7$, respectively (and otherwise reduce $0$). Suppose that the goal of the grid is  $M=200$ kWh. 

Consider the DR-VCG mechanism with a single fixed contract $(\ell=100,f_\ell=50)$. Since agent~1 always meets her commitment, $C^*_1=0$. For the others, $C^*_2=0.1\cdot 50 =5$ (as agent~2 fails w.p.~$0.1$),  and $C^*_3=0.3\cdot 50 =15$, i.e. $B_i=C^*_i$ are exactly the bids in Ex.~\ref{ex:3_bidders}. The selected set is $S^*=\{1,2\}$  (the DR-VCG mechanism filters out the least reliable agent). 
The total expense of the grid is $TE^{VCG}=30-5=25$ (expected penalty of $5$ from agent 2). 
\end{exmp}

\subsection{Fallback options and reserve costs}
	In general, the grid may not find enough agents to meet the reduction goal $M$, and may thus need to use some fallback option like a standby generator or emergency blackouts. For every amount $m$ we denote by $R_m$ the cost for the grid of using its fallback option to reduce consumption/increase production by $m$ units. E.g. if the total cost of \hma{total bids on any valid set of ?} demand response contracts exceeds $R_M$, then the power company is better off without assigning any contracts, or the fallback options can be used to fill up some gap between $\sum_{(i,j)\in S^*}\ell_j$ and $M$. 
	A fallback option can be simply added to the mechanism as a `virtual agent' that bids $R_m$ on the fixed contract $(\ell=m,F\equiv R_m)$. This is similar to the role of reserve prices in auctions. 
		The reserve costs guarantee that: (I)
			The grid finds a cheap set of solutions, whether these solutions are contracts with agents or external options; and (II) For any $(i,j)$, the reward is bounded: $r_i\leq R_{M-L}-R_{M-(L+\ell_j)}$ where $L=\sum_{j'\in S^*\setminus \{j\}}\ell_{j'}$. 
	\rmr{	and (III) for  any contract $(\ell,F)$ where $F(0) \geq R_\ell$, no agent can gain by taking the contract and being useless (reduce $X_i=0$).} \hma{Here comes my objection on ``cheapest" again. Agents bids and the reserve costs are different things and incomparable... we are comparing them anyway in the computation of allocation etc, but making this claim feels very dubious. I also don't think $r_i\leq R_{\ell_j}$ is correct without assuming that the grid can provide an unlimited number of $j$ units at this cost. Assume that there are two agents, all bidding on contract $j$. It's possible that both the two agents, and the reserve at level $j$ are selected by the optimal solution. The payments need to be computed through $R_{2j}$ (what's needed if an agent backs out), which can be larger than $2R_j$, thus the payment may exceed $R_j$.}

\section{Reliability and Expenses}
\label{sec:penalties}
The incentive analysis we presented goes through for any set of contracts. Yet, the goal of the grid  is to match demand and supply, preferably at  low total expenses, which is a-priori not the same as minimizing the sum of bids. By restricting DR-VCG to use structured constructs we can relate this goal. 

We next analyze how the sum of bids relates to \emph{reliability}, i.e. the probability that  the reduction target is met. \ver{A detailed example comparing reliability and payments across mechanisms is in Appendix~\ref{apx:example}.}{}

\subsection{Fixed penalty contracts}
Denote by $P(S,m)$ the probability that a quantity of at least $m$ is reduced under contracts $S$. Then $P^*(N,J,M) = P(S^*(N,J,M),M)$ is the \emph{reliability} of the DR-VCG mechanism. We would like to measure or bound $\ol P^*(N,J,M) = 1-P^*(N,J,M)$, which is the probability that the mechanism \emph{fails} to meet the lower bound reduction $M$ (we may omit some of the parameters).  
\vrb{OK, this is a nice result, especially as we can't do better in the most general case - although perhaps an example would help to give an intuition of how useful this bound is in practice. But, it is slightly "`isolated"' from the rest of the paper. Could also go IF really pressed with space.}

\begin{proposition}
Let $J = \{(\ell_j,f)\}_{j=1,2,\ldots}$ for some fixed $f$, then 
$\ol P^*(M) \leq \frac{1}{f}SB^*(M)$. This bound is tight.
\end{proposition}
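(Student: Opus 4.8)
The plan is to use Markov's inequality on the penalty collected in the optimal solution, exploiting the fact that with a uniform fixed penalty $f$, an agent's cost type $C^*_i(j)$ equals $f$ times the probability she fails to meet her commitment $\ell_j$. First I would fix the optimal contract set $S^* = S^*(N,J,M)$ and note that $\sum_{(i,j)\in S^*}\ell_j \geq M$ by definition of $\calS(M)$. Hence the event that total reduction falls below $M$ is contained in the event that at least one selected agent $i$ with contract $(i,j)\in S^*$ fails to reach her commitment $\ell_j$ (if every selected agent meets her commitment, the total is at least $\sum \ell_j \geq M$). By the union bound, $\ol P^*(M) \leq \sum_{(i,j)\in S^*} \Pr[X_i < \ell_j]$, where the probabilities are taken under each agent's optimal investment $c^*_i(j)$.

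Next I would relate each failure probability to the bid. For a fixed contract $(\ell_j, f)$, the expected penalty for agent $i$ investing $c$ is $EF_i(j,c) = f\cdot \Pr_{X\sim p_i(c)}[X<\ell_j]$, so $C^*_i(j) = \min_{c\geq 0}\big(c + f\,\Pr_{X\sim p_i(c)}[X<\ell_j]\big) \geq f\,\Pr_{X\sim p_i(c^*_i(j))}[X<\ell_j]$, i.e. the realized failure probability under the optimal investment is at most $C^*_i(j)/f = B_{ij}/f$ (using truthfulness, Theorem~\ref{th:auction}). Summing over $(i,j)\in S^*$ gives $\ol P^*(M) \leq \frac{1}{f}\sum_{(i,j)\in S^*} B_{ij} = \frac{1}{f}SB^*(M)$, which is the claimed bound.

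For tightness, I would construct an instance where the inequality is met with equality: take a single contract $(\ell = M, f)$ and a single agent who, at zero investment, reduces $M$ units with probability $1-q$ and $0$ units with probability $q$ (and for whom no positive investment helps, or helps too little to be worth it), so that $C^*_1 = qf$ and she is selected (with reserve cost set high enough). Then $SB^*(M) = qf$ and $\ol P^*(M) = q = \frac{1}{f}SB^*(M)$, so the bound is tight.

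I expect the main obstacle to be stating the union-bound step cleanly, since it requires the observation that "all selected agents meet their commitments" is a sufficient (not necessary) event for success, and that the per-agent failure events are the right things to bound; everything after that is a short Markov/union-bound computation plus invoking truthfulness from Theorem~\ref{th:auction}. One should also be slightly careful that the bound uses the agents' actual (optimal) investments $c^*_i(j)$ rather than their reports, but this is exactly what part~2 of Theorem~\ref{th:auction} guarantees.
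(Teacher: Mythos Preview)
Your proposal is correct and follows essentially the same route as the paper's proof: both decompose $C^*_i(j)=c^*_i(j)+f\cdot\Pr[X_i<\ell_j]\geq f\cdot\Pr[X_i<\ell_j]$, then apply the union bound together with $\sum_{(i,j)\in S^*}\ell_j\geq M$ to pass from the sum of individual failure probabilities to $\ol P^*(M)$. The paper presents the steps in the reverse order (first expand $SB(S)$, then chain the probability inequalities) and, for tightness, lists the four conditions under which each inequality is an equality rather than exhibiting a single concrete instance; your single-agent example simply satisfies all four conditions at once. One cosmetic point: you announce ``Markov's inequality'' but what you actually use is the union bound (equivalently Markov on the indicator count), so you may want to adjust the wording.
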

\begin{proof} 
For any agent $i$ that is assigned a contract $j$:
The optimal investment is $c^*_{i}(j)$. 
The expected penalty is 
$$EF_i(j,c^*_i(j))=Pr_{X\sim p_i(c^*_i(j))}[X<\ell_j] \cdot f,$$
i.e., proportional to the probability it will undershoot the commitment $\ell_j$. 
The DR-VCG mechanism minimizes
{\small $$\vspace{-2mm}SB(S)=\sum_{(i,j)\in S}C^*_i(j) = \sum_{(i,j)\in S}c^*_i(j) + f\cdot \sum_{(i,j)\in S}Pr[X_i < \ell_j],$$}
that is, a combination of the total investment and the sum of individual failure probabilities. Note that  
{\small
\begin{align*}
 & \sum_{(i,j)\in S}Pr[X_i < \ell_j]  \stackrel{(a)}{\geq} Pr[\exists (i,j)\in S \text{ s.t. }X_i< \ell_j] \\
\stackrel{(b)}{\geq} & Pr[\sum_{(i,j)\in S} X_i < \sum_{(i,j)\in S} \ell_j ] 
\stackrel{(c)}{\geq}   Pr[\sum_{(i,j)\in S} X_i < M]. \vspace{-5mm}  
\end{align*}
}
Thus,
{\small
\begin{align*}
SB^*(M) &\geq f \cdot Pr[\sum_{(i,j)\in S^*(M)} X_i < M] +  \sum_{(i,j)\in S^*(M)}c^*_i(j) \\
 &= f\cdot \ol P^*(M) +  \sum_{(i,j)\in S}c^*_i(j)  \stackrel{(d)}{\geq} f \cdot \ol P^*(M). \vspace{-2mm} 
\end{align*}
}

To see why the bound is tight, observe that the inequalities in the proof are tight if (respectively): (a) 
failure events are disjoint (i.e. maximally negatively correlated);
 (b) 
agents never reduce more than $\ell_j$; 
 (c) 
 the reduction goal is met exactly ($\sum \ell_j = M$); and (d) 
 investments are $0$. 
\end{proof}

Thus a fixed penalty lets us bound the probability that the grid fails (reduction goal is not met). As we increase $f$, the (bound on) failure probability becomes smaller, at higher expense (due to higher bids). \hma{I think the claim is true, that as $f$ increases, the overall reliability is going to increase. However, the reason for this is not that the bound gets tighter... $S^\ast$ would also increase with $f$, and it's not obvious that it should increase slower than $f$ does...}\rmr{I think it's ok}
Of course, this is a worst-case bound. If, for example, some agents exceed their commitment then this would compensate for failures of others, and will increase the probability that the reduction goal $M$ is met. 

In general the grid may set a higher goal $M'=\gamma M$ than the expected surge $M$ as a safety margin. 
Another result ties this safety margin with the sum of bids.
\begin{proposition}\label{prop:F_margin}
Suppose that there is a single Fixed contract $(1,f)$. Then $M'-E[\sum_{i\in S^*(M')} X_i] \leq \frac{1}{f}SB^*(M')$.
\end{proposition}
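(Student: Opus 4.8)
The plan is to mimic the structure of the preceding Proposition's proof, but now track expectations of reductions rather than failure probabilities. Since the contract in use is the unit Fixed contract $(1,f)$, every selected agent $i$ is assigned a commitment $\ell=1$, and for such an agent the expected penalty is $EF_i(c_i^*) = f\cdot Pr_{X\sim p_i(c_i^*)}[X < 1] = f\cdot Pr[X_i=0]$ (assuming reductions are non-negative integers, so $X_i<1$ means $X_i=0$). Hence the cost type satisfies $C^*_i = c^*_i + f\cdot Pr[X_i=0]$, and summing over the optimal set,
$$SB^*(M') = \sum_{i\in S^*(M')} c^*_i + f\cdot \sum_{i\in S^*(M')} Pr[X_i = 0] \geq f\cdot \sum_{i\in S^*(M')} Pr[X_i=0],$$
dropping the non-negative investments.

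The key step is then to lower-bound $\sum_{i\in S^*(M')} Pr[X_i=0]$ by $M' - E[\sum_{i\in S^*(M')} X_i]$. For a single agent with a non-negative integer-valued reduction, $Pr[X_i=0] = 1 - Pr[X_i\geq 1] \geq 1 - E[X_i]$, since $E[X_i] \geq Pr[X_i\geq 1]$ for non-negative integer random variables. Summing this inequality over all $i\in S^*(M')$ gives
$$\sum_{i\in S^*(M')} Pr[X_i=0] \geq |S^*(M')| - E\Bigl[\sum_{i\in S^*(M')} X_i\Bigr] \geq M' - E\Bigl[\sum_{i\in S^*(M')} X_i\Bigr],$$
where the last inequality uses the feasibility constraint $S^*(M')\in\calS(M')$: since every contract has $\ell_j=1$, the number of selected agents is exactly $\sum_{(i,j)\in S^*(M')}\ell_j \geq M'$. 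Combining with the first display yields $SB^*(M') \geq f\cdot(M' - E[\sum_{i\in S^*(M')} X_i])$, which rearranges to the claim.

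The main obstacle, such as it is, is purely a modelling subtlety rather than a hard inequality: I must be careful about whether reductions are integer-valued (so that $X_i<\ell_j=1$ is equivalent to $X_i=0$) — the excerpt's complexity section does assume integer energy units, so this is justified — and I must make sure the step $E[X_i]\geq Pr[X_i\geq 1]$ is invoked only for non-negative integer $X_i$, which is exactly the setting here. Everything else is a direct transcription of the bookkeeping in the previous proof, with "sum of failure probabilities" replaced by "sum of $Pr[X_i=0]$" and Markov-type slack ($E[X_i]\geq Pr[X_i\geq 1]$) playing the role that the union bound played before. It is worth noting the proof does not need any independence assumption, since linearity of expectation is all that is used on the right-hand side.
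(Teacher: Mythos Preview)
Your proof is correct and follows the same skeleton as the paper's: bound $SB^*(M')$ below by $f$ times the sum of individual failure probabilities, then relate that sum to $|S^*(M')|-E[\sum X_i]\ge M'-E[\sum X_i]$.

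The one substantive difference is that the paper's argument tacitly treats each $X_i$ as Bernoulli (reduce $1$ with probability $p_i$, else $0$), so that $E[X_i]=p_i$ and $Pr[X_i<1]=1-p_i$ hold with equality; it then writes $\sum_i E[X_i]=\sum_i p_i=|S|-\sum_i(1-p_i)$ directly. You avoid this implicit assumption by invoking the Markov-type bound $Pr[X_i=0]\ge 1-E[X_i]$ for general nonnegative integer $X_i$. That buys you a slightly more general statement (any integer-valued reduction, not just $\{0,1\}$), at the cost of one extra inequality step. Otherwise the two proofs are identical in structure.
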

\rmr{better as $E[\sum_{i\in S^*(M')} X_i] \geq M'-\frac{1}{f}SB^*(M')$?}
\rmr{remove proof before submission
\begin{proof}
Since all contracts over one unit of energy, we have $|S^*|=M$, and the reliability of each agent $i$ is $p_i$. Once again $B_{ij}=(1-p_i)f + c^*_i(j) \geq (1-p_i)f$.
Thus for any set $S$, 
\begin{align*}
E&[\sum_{i\in S} X_i] = \sum_i E[X_i] = \sum_i p_i = |S|-\sum_{i\in S}(1-p_i) \\
&\geq M' - \sum_{(i,j)\in S}\frac{1}{f}B_{ij} = M'-\frac{1}{f}SB(S),
\end{align*}
and this holds in particular for $S=S^*(M')$.
\end{proof}
}
Thus if the grid sets $M'$ s.t. 
$M'-\frac{1}{f}SB^*(M') \geq M$, then actual reduction is at least $M$ \emph{in expectation}. 
\rmr{it might not be possible, but at least we would know after trying}
\subsection{Cliff penalty contracts}

We saw that having a constant penalty for a violation allows us to bound the failure probability.
A Cliff penalty is more ``forgiving,'' yet it provides similar guarantees.

\begin{proposition}\label{th:cliff_M}
Suppose that the set of possible contracts $J$ is composed of Cliff penalty contracts of the form $(\ell_j,f,\alpha,\beta_j)$ for some fixed $f$ and $\alpha$ (same for all contracts), then 
$\ol P(S,\alpha \cdot M) \leq \frac{1}{f} SB(S)$.  This bound is tight.
\end{proposition}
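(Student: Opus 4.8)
The plan is to mimic the proof of the Fixed-contract proposition, adapting the chain of inequalities to account for the ``forgiving'' middle segment of the Cliff penalty. The key observation is that under a Cliff contract $(\ell_j, f, \alpha, \beta_j)$, whenever the realized reduction $X_i$ falls strictly below $\alpha \cdot \ell_j$, the penalty is exactly the constant $f$. So the expected penalty satisfies $EF_i(j, c) \geq f \cdot \Pr_{X \sim p_i(c)}[X_i < \alpha \ell_j]$, dropping the (nonnegative) contribution of the linear-penalty region $\alpha\ell_j \le X < \ell_j$. This is the Cliff-analogue of the ``penalty proportional to failure probability'' step, with the threshold $\ell_j$ replaced by $\alpha \ell_j$.

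From there I would follow the same four-step bounding argument. First, since DR-VCG minimizes $SB(S) = \sum_{(i,j)\in S} C^*_i(j) = \sum_{(i,j)\in S} c^*_i(j) + \sum_{(i,j)\in S} EF_i(j,c^*_i(j))$, and each $EF_i(j,c^*_i(j)) \geq f\cdot \Pr[X_i < \alpha\ell_j]$, we get $SB(S) \geq \sum_{(i,j)\in S} c^*_i(j) + f\sum_{(i,j)\in S}\Pr[X_i < \alpha\ell_j]$. Next, bound the sum of individual failure probabilities below by a joint event exactly as in steps (a)--(c): union bound gives $\sum \Pr[X_i < \alpha\ell_j] \geq \Pr[\exists (i,j) : X_i < \alpha\ell_j] \geq \Pr[\sum_{(i,j)\in S} X_i < \alpha \sum_{(i,j)\in S}\ell_j]$, and since $S$ (being valid, i.e. in $\calS(M)$) has $\sum_{(i,j)\in S}\ell_j \geq M$, this is $\geq \Pr[\sum_{(i,j)\in S} X_i < \alpha M] = \ol P(S, \alpha M)$. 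Combining and dropping the nonnegative $\sum c^*_i(j)$ term yields $SB(S) \geq f \cdot \ol P(S,\alpha M)$, which rearranges to the claim. Note the statement is about an arbitrary valid $S$, not just $S^*$, so I do not even need optimality of $S^*$ here — only that $S \in \calS(M)$.

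The one step that needs slightly more care than in the Fixed case is the middle union-bound manipulation: passing from $\Pr[\exists (i,j): X_i < \alpha\ell_j]$ to $\Pr[\sum X_i < \alpha\sum \ell_j]$. This uses the implication ``if $\sum_{(i,j)\in S} X_i < \alpha \sum_{(i,j)\in S}\ell_j$ then some individual $X_i < \alpha \ell_j$'' — i.e. the contrapositive, that if every $X_i \geq \alpha\ell_j$ then $\sum X_i \geq \alpha\sum\ell_j$, which is immediate by summing. So the probability of the aggregate event is at most the probability of the ``exists'' event, giving the inequality in the needed direction. I expect this to be the main (though still minor) obstacle, mostly in writing the direction of the inequalities cleanly.

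Finally, for tightness, I would exhibit a scenario making all four inequalities equalities simultaneously, exactly paralleling the Fixed-contract case: (a) the individual failure events $\{X_i < \alpha\ell_j\}$ are pairwise disjoint (maximally negatively correlated); (b) agents never reduce more than $\alpha \ell_j$ (so that ``some agent undershoots $\alpha\ell_j$'' coincides with ``aggregate undershoots $\alpha\sum\ell_j$''), or more simply each agent's reduction is either $0$ or exactly $\alpha\ell_j$; (c) $\sum_{(i,j)\in S}\ell_j = M$ exactly, so $\alpha\sum\ell_j = \alpha M$; and (d) optimal investments are $0$. Concretely, a single agent with a single contract $(\ell, f, \alpha, \beta)$ who with probability $q$ reduces $\alpha\ell$ and with probability $1-q$ reduces $0$ at zero cost, with $M = \ell$, gives $SB^* = (1-q)f$ and $\ol P(S^*, \alpha M) = 1-q$, meeting the bound with equality.
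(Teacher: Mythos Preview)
Your proof of the inequality is correct and follows the paper's argument essentially verbatim: drop the investment costs, lower-bound each expected Cliff penalty by $f\cdot\Pr[X_i<\alpha\ell_j]$, and then run the same (a)--(c) chain with threshold $\alpha\ell_j$ in place of $\ell_j$.

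The tightness part has a small but real bug. In passing from $EF_i(j,c)$ to $f\cdot\Pr[X_i<\alpha\ell_j]$ you explicitly dropped the nonnegative contribution of the linear segment $[\alpha\ell_j,\ell_j)$; that is an \emph{extra} inequality beyond the four you list for the Fixed case, and your tightness conditions (a)--(d) do not force it to be an equality. Your concrete example violates it: if the agent reduces exactly $\alpha\ell$ with probability $q$, that realization lands in the linear region and incurs penalty $(\ell-\alpha\ell)\beta>0$, so $SB^*=(1-q)f+q\,\ell(1-\alpha)\beta$, which strictly exceeds $(1-q)f$ whenever $\alpha<1$ and $\beta>0$. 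This is why the paper remarks, just after the proposition, that tightness is achieved when $\alpha=1$ or $\beta_j=0$ for all $j$. Your example is repaired by taking $\beta=0$; alternatively, keep $\beta$ arbitrary but let the successful realization be $X=\ell$ rather than $X=\alpha\ell$ (with a single agent, steps (a) and (b) are then trivially equalities, and the dropped linear-segment term vanishes because $\Pr[X\in[\alpha\ell,\ell)]=0$).
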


That is, we get a guarantee on the probability that we miss the reduction goal by a factor of $\alpha$ (tightness is achieved if either $\alpha=1$ or $\beta_j=0$ for all $j$). Note that this does not require any assumption on the types of the agents. The grid can then sign contracts that sum up to $M' = \frac{M}{\alpha}$ so as to bound the probability of missing its actual goal $M$.



\rmr{removed:
\newpar{Interval contracts} To guarantee demand-supply match, the grid may want to also place an upper bound on the total reduction, i.e. prefer contracts with lower overall uncertainty. This is possible by using  contracts with penalties for excessive reduction (Interval contracts, as in Fig.~\ref{fig:contracts}(c)) and modifying the optimization goal accordingly. We leave a detailed study of Interval contracts to future work.
}

\if 0
\subsection{Learning}\rmr{maybe shorten and move to discussion}
\vrb{OK, but this requires repeated interaction with the same users, although this is realistic. Actually, in current practice, from what I know, human operators do this - however as you say response rates in DR programs are low, so pretty much everyone that wants to participate can do so. However, I would shorten these points and move to discussion/related work.}
As discussed in the previous section, some parameters of the mechanism can be set naturally, but there are no obvious optimal values for the penalty scheme.  However, since the mechanism is strategyproof for \emph{any} penalty scheme, the grid can optimize the parameters of the the penalty scheme based on past data such as responsiveness of the agents and correlations among them. 

Since changing the penalties is also likely to change the behavior, the grid will probably need to experiment with several schemes in order to improve efficiency. We need to keep in mind that failing to meet the reduction goal in practice might have severe consequences such as power shortages (in contrast to overshooting the goal which gradually increase the social cost and expenses). 
  
We thus suggest to modify the mechanism in the following way: start with a very conservative initial setting, either by setting high penalties (which would lead to low commitment goal that are met with high probability) or by setting a large safety margin $M'>M$. This would guarantee sufficient energy reduction, but at a high social cost since we possibly select too many agents. However, the company can track the average response rate and modify the penalty scheme gradually,\footnote{The effect of each agent on the policy is thus negligible, whereas by tracking individual reliability might lead to weird incentives where agents try to influence the penalty scheme.} thereby improving the response rate over time until it gets to the desired reliability goal (say, $P^*(M') \geq 99\%$). Then the company can decrease the safety margin closer to $M$. 

The nice thing about this approach, is that the company does not need at any point to explicitly model the beliefs, the types, or even the individual reliability of the agents.  It can focus on learning a good penalty scheme, knowing that incentives are aligned \hma{assuming that the agents respond myopically to the current penalty schemes and don't consider how the reports would affect how the penalty schemes would change accordingly in future}\rmr{See footnote} and the optimal set of contracts under this scheme will be selected. 
\fi

\if 0

\section{Interval Contracts and Computation}
\label{sec:compute}

In order to run the bidding mechanism, we should be able to efficiently compute the optimal subset of contracts. We focus our attention on Cliff penalty schemes, but the algorithm can be easily extended to intervals and other penalty schemes. 

 In the general case, finding an optimal solution is NP-hard, by a trivial reduction from the Knapsack problem: for every item $i\in N$ with volume $v_i$ and worth $w_i$, define a fixed contract $q_i=(w_i,1)$, and an agent $i$  such that  $c^*_i(q_i)=v_i$ and $c^*_i(q_j)=v\infty$ for any $j\neq i$. Then $SB(N,J)\leq C$ if and only if there is a set of items of total worth $M$ that fit in a sack of size $C$.

However, the knapsack problem is solvable when the units are \emph{bounded integers}, and a similar solution can be applied to our problem. Indeed, suppose that energy units are integers, and are bounded by some $U\in \mathbb N$. 

%

\hma{I skipped this part again since I haven't yet learned dynamic programming}
\begin{theorem}
    Both of $S^*(N,J)$ and $SB^*(N,J)$ can be computed in time polynomial in $n,U,|J|$.
\end{theorem}
		\begin{proof}
		The input is a matrix of bids, where $B_{ij}=C^*_i(j)$ is the bid of agent $i\in N$ on contract $j\in J$.
		We provide a dynamic program that decides in poly time whether $SB^*(N,J)\geq C$. We can then solve the optimization problem by doing binary search on the value of $C$. 
		
		Fix an arbitrary order over agents in $N$, and initialize tables $TS$ and $TSC$, each of size $(n+1)\times U$. 
		
		\begin{enumerate}
			\item The cell $TS(0,m)$ is initialized to $\emptyset$ for $1 \leq m \leq U$.
		\item The cell $TSC(0,m)$ is initialized to zero for $m=0$ and to infinity for $1 \leq m \leq U$.
		\item The cell $TS(k,m)$ will contain the optimal subset  $S\subset\{1,\ldots,k\}\times J$  s.t. $\sum_{(i,j)\in S}\ell_j = m$.
		\item $TSC(k,m)$ will contain the cost of the set $TS(k,m)$, i.e. $TSC(k,m)= \sum_{(i,j)\in TS(k,m)}C_{ij}$. 
		\end{enumerate}
		
		Given $TS(k-1,m)$ for all $m\leq U$, we can compute $TS(k,m)$ for all $m\leq U$, by considering the best option that meets $m$ without agent $k$ ($TSC(k-1,m)$), and all possible contracts of agent $k$, i.e. $TS(k,m-\ell_j)\cup \{(k+1,j)\}$ for each $j\in J$. We take the  solution from all $|J|+1$ options.
		
		More formally, let
		$$j^* = \argmin_{j\in J} TSC(k-1,m-\ell_j) + C_{kj}, s^* = min_{j\in J} TSC(k-1,m-\ell_j) + C_{kj}.$$
		If $s^*\geq TSC(k-1,m)$ then set $TS(k,m) = TS(k-1,m)$ and $TSC(k,m) = TSC(k-1,m)$. Otherwise, set $TS(k,m) = TS(k-1,m-\ell_{j^*}) \cup \{(k,j^*\})$ and $TSC(k,m) = TSC(k-1,m-\ell_{j^*}) + C_{kj^*}$.
		
		Finally, We consider all feasible solutions ($TS(n,m)$ for $m\geq M$) and select the one with the minimal cost $TSC(n,m)$ among them. 	
		\end{proof}
\fi		
	
\if 0
	
\rmr{
\section{Simulations Plan}
\rmr{the entire text in this section is a place holder for simulation results}
We want to show that the interval bidding mechanism provides better results than the current mechanism in terms of reliability, welfare and cost to the grid. 

\rmr{summary of PG\&E results for 2012-2013:

PG\&E called nine events in 2012-2013, with three in 2012 and six in 2013. Each event lasted eight hours (from 12pm to 8pm). According to the impact evaluation reports, the program achieved an average load impact of 37.8 MWh (95\% of bid amount) in 2012 and 35.8 MWh (87\% of bid amount) in 2013. (So a reasonable range for $M$ is $[10K,100K]$ kWh)

In 2012-2013, there were 415 unique enrolled customers, totaling 1,039 sites (SAs). Notably, customers can have multiple SAs enrolled, and participate in multiple events at different times.

In 2012-2013, 12\% of sites (SAs) and 20\% of participants were active.

 Program participants belong to a wide variety of industries, though nearly half (48\%) of customers come from either the Offices, Hotels and Services or the Manufacturing industries.

 Active participants typically bid in approximately two-thirds of the events called in 2012 and 2013. However, while more customers bid in events in 2013 on average, the same amount of customers reduced load in both years. Further, based on our review of the program databases received, nearly a quarter of active sites (29 of 129, or 22\%) submitted a default bid for at least one event.

Active participants generally received incentives of a few hundred dollars per active site (SA) per event, though the incentive range varied significantly. Two customers earned tens of thousands of dollars in 2012 and 2013. Both are large manufacturing customers with a single site enrolled in DBP and dually enrolled in the BIP. One of the two largest earners is Auto-DR enabled.

We found that the majority of active customers are not earning as much incentives as they could due to inaccurate bids. Over three-quarters of sites (SAs) (101 of 129 active sites) “over-performed” in at least one hour of an event in 2012-2013. More specifically, their actual load reduction exceeded 150\% of their hourly bid. In these cases, the customer did not receive incentives for load reduction beyond the 150\% limit. Of those who over-performed, on average they reduced their energy consumption three times their bid amount. The largest over-performance was more than forty times their bid amount. Program staff indicate that this is likely because these participants do not change the default bid setting when bidding into the events. Supporting this theory, of those who submitted a default bid in at least one event (29 sites), 19 over-performed at some point. This suggests that, with coaching, these customers may be able to submit bids that are more accurate and earn larger incentives.

The majority of load reduction achieved during DBP events is concentrated among very few active participants. The top two sites account for more than 50\%, and top ten sites account for more than 80\%, of the load reduction. 
}
\rmr{Setting up a population: we can think of a heavy-tail distribution of capacities, going from $10-20$ MWh for top consumer, $1-3$ MWh for top 10 \emph{active ones}, to $4-12$ kWh for small consumers (this is the peak consumption of households \url{http://www.saifurrahman.org/wp-content/uploads/2014/12/Load-Profiles-of-Selected-Household-Appliances-2.pdf}).
 Average reliability under DR-SCE should be 85-95\%. 

 How to model the inactive clients (under DR-SCE)? Makes sense that these are clients with higher investments.  If we assume that investment is independent from consumption profile, then this means we can just clone the active population and add investments. In the other extreme, we can assume that the inactive agents are the tail of the consumption distribution, so even going up from 100 to 1000 active clients will have a small effect on the total commitment. DR-VCG would perform better as we are closed to the first scenario (as DR-SCE simply fails to recruit powerful clients), but the truth is probably somewhere in the middle. We can thus assign (random) investment costs whose mean is inversely proportional to mean consumption. E.g. $\$0.3-3$ per kWh for households, $\$0.1-1$ for large clients, and $\$0-0.3$ for top client. For simplicity we can assume that each client has only two profiles - invest 0 and save 0, or invest her predefined cost and save according to the profile above.   
}

\subsection{Using the SCE penalty scheme}
Even when using the current penalty scheme, the interval bidding mechanism has two advantages over the current mechanism: selection and competition.  A simple way to show this is to generate a population of agents of different types (in particular different scale and reliability). \rmr{what is a reasonable population size?} \hma{Maybe one hundred or two. The DBP documents said that total number of consumers that participated at least once in a year is 138 for some year, but I don't know the number for each time the mechanism is run} \rmr{see some details above. In the lack of better information we can use Table~15  in the \url{California_2012-2013_Demand_Bidding_Program_Process_Evaluation_FINAL.pdf}} Suppose even that agents invest no effort. Note that only the lower bound $\ell$ of each interval matters under the current penalty scheme, thus we also ignore the uncertainty cost ($\lambda=0$). 
 Then for a given reduction goal $M$:
\begin{itemize}
	\item Under the DR-SCE mechanism each agent bids her dominant strategy $b^*_i$.  The grid then signs contracts with random agents until $\sum_{i\in S}b^*_i \geq M$.  \hma{Again I don't think the grid is doing any selection in DBP.} \rmr{I know but this way we only compete with a better mechanism (and still win :)} \hma{Moreover, without a well described underlying model of distributions of marginal benefits of using electricity, it is not clear what the strategies would be in the DBP} \rmr{If we assume $X_i$ is uniform in the range $[\ell_i, u_i]$, then  I think it is a dominant strategy to bid $b_i = \frac12(\ell_i+u_i)$.}
	\item under the interval bidding mechanism, the grid asks for bids in all values of $\ell$ in some resolution (say, 100 levels). The optimal set of contracts $S^*$ is found and rewards are computed. 
	\end{itemize}
Now, for each mechanism, we compute:
\begin{itemize}
	\item What is the actual distribution of power reduction? In particular, what is the largest $M'<M$ that is attained w.p. 95\%, 99\% etc. \rmr{My conjecture: the SCE mechanism will perform badly for small $M$ (where only few agents are needed)}
	\item What is the expected total payment from the grid? \rmr{My conjecture is that in the SCE mechanism payment is roughly linear in $M$. In the DR-VCG we will get a convex function in the low $M$'s: when $M$ is low there is much competition and rewards are low, when $M$ is large we need almost everyone and rewards will be close to the reserve costs.}
	\vrb{All these intuitions and experimental set-up suggestions make sense to me. We can re-discuss and adjust once we have some preliminary experimental results, of course.}
	
	\end{itemize}
	
	\newpar{Fallback option}
	 In the simplest simulation above we can keep $R_m=m/2$.

	We can then add a virtual agent representing the standby generator, that meets $M$ alone with 100\% reliability and some fixed predefined operation cost $C$. In the current mechanism whenever  $M/2>C$ (i.e. cheaper to use the generator than to hire agents) there are no contracts. In the interval bidding mechanism the virtual agent simply competes with all others. We then measure how often the generator is used. Note that this is a realistic modification of the mechanism that also makes reserve prices redundant. \rmr{I conjecture that the SCE mechanism will fail to recruit enough agents to make a difference (so the mechanism still needs the generator but also pays some agents). in the interval bidding mechanism the generator will be used when there are not enough reliable agents, i.e. exactly in the problematic instances with lots of variance.}

We can try such simulations for different population sizes and different distributions (uniform, normal, etc.).
\newpar{Effort}
Another step is to allow each agent two (or more) levels of effort. To simplify we can e.g. assume that all agents have the same $f^0$ and $f^1$ distributions (power cut without/with effort), and only the cost of effort differs. 

Then in the current mechanism $b^*_i$ and the induced utility should be computed for each effort level, and only the best one is reported. In the interval bidding mechanism, the total cost (effort+penalty) for each value of $\ell$ is computed.
Then we can also measure the social cost (total investment of effort) in both mechanisms. \rmr{conjecture: cost will be lower in the interval bidding mechanism, especially for low $M$, since it can pick the more efficient agents}

\subsection{Optimizing the penalty scheme}
Suppose we fix some population from the previous simulation. Now the penalty scheme is quite arbitrary, so by changing it we may get much better results. E.g. if $X_i$ is indeed sampled uniformly \hma{not necessarily uniformly, but as long as the support is bounded from below agents can report the bound and the mechanism can shoot agents. However, this observation also makes it suspicious if we claim optimality --- the cost of an agent missing the interval for the society is definitely not infinite. If we force the agents to all report the most conservative bounds, it's obvious that this may induce huge preparation and reduction costs, which may be sub-optimal} from some interval, then the optimal penalty scheme would be set to infinity, in which case agents will always bid the lower end of the interval and never pay a penalty, and if the reduction goal $M$ is met then there will never be failures. 
With other distributions (e.g. Normal) it is not clear how the optimal penalty scheme would look, but we can try different schemes (high/low, steep/shallow, convex/concave, etc.) and see how they perform. I can't think of any structured way to find good penalty schemes given a distribution of types. Is there any ``optimal benchmark'' we can compare to?

\newpar{Uncertainty reduction}\rmr{ignore}
Suppose the grid does case about uncertainty $\lambda>0$, then this would be managed by including in the penalty scheme a penalty for overshooting the interval $[\ell,u]$. We can add such a penalty to the current scheme \hma{I don't understand how to add such an uncertainty term since the current mechanism is not optimizing anything?} (say with the same slope) and then measure also how often the mechanism overshoots $M$ by a large amount. \rmr{Here too the interval bidding mechanism should outperform the current one, by selecting agents that bid on small intervals.}

\subsection{Tentative simulation plan}
This tries to sum up the above observations to a plan. 
\begin{enumerate}
	\item Define a heavy-tail distribution over type profiles according to the parameters above.
	\item Let $J$ contain all $j_\ell$ contracts for some resolution of $\ell$. 
	\item Sample (in each iteration) 1000 agents with investment costs, capacity and reliability (so many agents will have high costs and lower incline to participate).
	\item No-cost condition:
	\begin{itemize}
	    \item need reserve prices?
	    \item remove all agents with investment costs $>\$0.5$, then ignore costs. 
	    \item Run DR-SCE and DR-VCG for all $M$ in the range + select-all version. 
			\item plot $\ol P^*(N,J,M)$ and $TE^*(N,J,M)$ \hma{expense notation.} (total expense) for both mechanisms, as a function of $M$. \rmr{They should coincide for select-all. Then as $M$ decreases the expense of VCG should drop more sharply, and the failure rate should increase more slowly than that of SCE}
			\item we can also plot $\ol P(S^*(N, J, \gamma\cdot M),M)$, which is the failure probability once we take a safety margin of $\gamma$. 
  \end{itemize}
	\item investment cost condition:
	\begin{itemize}
	    \item set $R_m=m/2$ for all $m$.
	    \item Run DR-SCE and DR-VCG for all $M$ in the range + select-all version. 
			\item plot $\ol P^*(N,J,M)$ and $TE^*(N,J,M)$ \hma{expense notation.}  (total expense) for both mechanisms, as a function of $M$. \rmr{We should now see a stronger advantage of VCG, as it selects agents with lower investment costs}
  \end{itemize}
	\item use generator condition:
	\begin{itemize}
	    \item Set $M=35000$ (an average DR event).
			\item Set $R_m=m/10$ for all $m<M$, and $R_m = C^* + m/10$ for all $m\geq M$, where $C^*$ is some large constant representing the cost of the generator (say, between $[1000,100000]$).
	    \item Run DR-SCE and DR-VCG for all $C^*$ in the range. 
			\item plot $\ol P^*(N,J,M)$ and $TE^*(N,J,M)$  \hma{expense notation.} (total expense) for both mechanisms, as a function of $C^*$. Now $\ol P^*$ is the probability that the generator is used so we can include it in the total expense.  
			\rmr{For low $C^*$ we should have results like previous conditions. For high $C^*$, SCE will fails almost always, whereas VCG will succeed more by paying higher rewards (but still cheaper than the generator). }
  \end{itemize}
	
\end{enumerate}
}

\fi

\section{DR-SCE vs. DR-VCG}
We argue that the DR-SCE mechanism can be simulated exactly using a Cliff payment contract. Formally, in DR-SCE and each agent submits a bid $b_i$, and the grid selects agents at random until $\sum b_i\geq M$. 
  After the reduction $X_i$ is realized, each selected agent gets a reward of $r_i=0$ if $X_i<b_i/2$,  $r_i=\frac{X_i}{2}$ if $X_i\in [b_i/2,3b_i/2]$, and   $r_i= 3b_i/2$ otherwise.
 
We argue that same SCE contracts used today can be offered via the DR-VCG mechanism.
For any $\ell>0$, we define a Cliff contract $j_\ell = (\ell,f_\ell = \frac{\ell}{2},\alpha = \frac13, \beta = \frac12)$. 
\begin{proposition}\label{prop:SCE_is_VCG}
For any agent $i$ of type $p_i$, submitting optimal bid $b_i$ to the DR-SCE mechanism is ex-post equivalent to being the only bidder in DR-VCG with $M\geq b_i$, $J^{SCE} \hspace{-0.2em}=\{j_\ell\}_{\ell\geq 0}$, and reserve prices $R_m=m/2$ for all $m$.  
\end{proposition}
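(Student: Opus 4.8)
My plan is to evaluate both mechanisms for a single agent and then match SCE bids to the Cliff contracts $j_\ell$ under which the two cash flows agree realization‑by‑realization. First I would compute the DR‑VCG side: with $N=\{i\}$ and the fallback acting as a reliable virtual supplier of any quantity $m$ at cost $R_m=m/2$, the only sensible valid allocations are the reserve alone (cost $M/2$), $i$ on some $j_\ell$ with $\ell\le M$ plus the reserve covering the residual $M-\ell$ (cost $C^*_i(j_\ell)+\tfrac{M-\ell}{2}$), and $i$ covering $M$ alone; since $R_m$ is increasing, over‑covering is never optimal. Provided $M$ is large enough that the selected commitment does not exceed $M$, the optimum is ``$i$ takes $j_{\ell^*}$, reserve covers $M-\ell^*$'' with $\ell^*\in\argmin_{\ell}\big(C^*_i(j_\ell)-\tfrac{\ell}{2}\big)$, and the Clarke payment is $r_i=SB^*(N_{-i})-SB^*_{-i}(N)=\tfrac M2-\tfrac{M-\ell^*}{2}=\tfrac{\ell^*}{2}$, paid up front; afterwards $i$ pays $F(j_{\ell^*},X_i)$, so her realized net transfer is $\tfrac{\ell^*}{2}-F(j_{\ell^*},X)$ as a function of the realized reduction $X$, and by Theorem~\ref{th:auction} she invests $c^*_i(j_{\ell^*})$.

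Next I would establish the identity that drives everything: for every $\ell\ge 0$ and every $X$,
$$\tfrac{\ell}{2}-F(j_\ell,X)=\rho\big(\tfrac{2}{3}\ell,X\big),$$
where $\rho(b,X)$ is the DR‑SCE reward for a bid $b$ and realized reduction $X$, namely $0$ for $X<b/2$, $X/2$ for $b/2\le X<3b/2$, and (at \$0.5 per reduced unit, capped at 150\% of the bid) $\tfrac12\cdot\tfrac{3b}{2}=\tfrac{3b}{4}$ for $X\ge 3b/2$. One checks this by matching the three Cliff regimes of $F(j_\ell,\cdot)$ (breakpoints $\alpha\ell=\ell/3$ and $\ell$) to the three SCE regimes (breakpoints $b/2$ and $3b/2$) under the substitution $b=\tfrac23\ell$: on the flat‑penalty regime both sides are $0$ because $f_\ell=\ell/2$ cancels the up‑front reward; on the linear regime $\tfrac{\ell}{2}-(\ell-X)\tfrac12=\tfrac X2$; and on $X\ge\ell$ both sides equal $\tfrac{\ell}{2}=\tfrac{3b}{4}$. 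Thus $b\mapsto j_{3b/2}$ turns an SCE bid of $b$ into the Cliff contract whose DR‑VCG transfer (with up‑front reward $\tfrac{3b}{4}$) reproduces the SCE payment pointwise in $X$.

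Finally I would check that the \emph{choices} coincide, which upgrades the match to ex‑post equivalence. In DR‑SCE the agent maximizes $\E_{X\sim p_i(c)}[\rho(b,X)]-c$ over bid $b$ and investment $c$; substituting the identity with $\ell=\tfrac32b$ and then optimizing over $c$ --- which by definition of $C^*_i$ gives $\tfrac{\ell}{2}-C^*_i(j_\ell)$, attained at $c^*_i(j_\ell)$ --- her best value for bid $b$ is $\tfrac{\ell}{2}-C^*_i(j_\ell)$. Maximizing this over $b$ is the same as minimizing $C^*_i(j_\ell)-\tfrac{\ell}{2}$ over $\ell$, which is exactly the rule defining $\ell^*$. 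Hence the SCE‑optimal bid is $b_i=\tfrac23\ell^*$, the agent invests the identical $c^*_i(j_{\ell^*})$ under both mechanisms (so $X$ has the same distribution), and by the identity her realized payment and realized utility coincide for every $X$; when the optimal surplus is $0$ this is the null outcome (bid $0$ in DR‑SCE, not selected in DR‑VCG) in both, consistent with individual rationality.

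The step I expect to be the main obstacle is the Clarke‑payment computation: one must verify that $i$ is pivotal against the reserve and that the reserve \emph{exactly} absorbs the residual $M-\ell^*$, so that $r_i=\ell^*/2$ rather than the capped value $M/2$. This is where the magnitude of $M$ enters, and strictly what is needed is $M\ge\ell^*=\tfrac32b_i$ --- the statement's $M\ge b_i$ should be read in that spirit. One should also fix a consistent tie‑breaking rule for zero‑surplus agents. On the other hand, the market‑clearing argument already used in the proof of Theorem~\ref{th:auction} gives for free that the commitment $\ell^*$ selected by the mechanism is the agent's own preferred commitment, so no extra work is needed to reconcile the two mechanisms' allocations beyond the bijection above.
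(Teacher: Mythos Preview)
Your proposal is correct and follows essentially the same route as the paper: both establish the pointwise identity $\tfrac{\ell}{2}-F(j_\ell,X)=\rho(\tfrac{2}{3}\ell,X)$ via the three-case analysis under the bijection $\ell=\tfrac{3}{2}b$, derive $r_i=\ell^*/2$ from the reserve, and then show the optimal SCE bid and the DR-VCG-selected contract coincide by reducing both to $\argmin_\ell\{C^*_i(j_\ell)-\ell/2\}$. Your flag that the hypothesis should strictly be $M\ge \ell^*=\tfrac{3}{2}b_i$ (so the reserve absorbs a nonnegative residual and $r_i=\ell^*/2$ rather than being capped at $M/2$) is a valid refinement the paper glosses over.
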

\begin{proof}[Proof sketch]
We show that contract $b_i$ in DR-SCE is completely equivalent to contract $j_\ell$ where $\ell=3b_i/2$ (i.e. same behavior and same ex-post utility). This is by writing  the penalty function $F(j_\ell,X)$, and considering the realization of $X_i$ when: $X_i< b_i/2$, $X_i\in [b_i/2, 3b_i/2]$, and $X_i>3b_i/2$. 
Then bidding $b_i$ is optimal in DR-SCE if and only if DR-VCG assigns $j_\ell$ to $i$.  
\hma{It's obvious that the contracts are equivalent, but it took me a while to understand why having this single agent in VCG is equivalent to SCE. If we have space, how about providing some intuition like there is some profit for the agent for bidding each amount under SCE, and the agent optimizes the bid. This bid would also be chosen by VCG since it also minimizes the total summation of bids and reserve costs combined?}
\end{proof}

\medskip
Proposition~\ref{prop:SCE_is_VCG} has two important implications. First, transition from the currently used DR-SCE mechanism to DR-VCG can be gradual and backward-compatible: we can still allow bids on quantity ($b_i$) and internally convert them to the appropriate Cliff contract $j_{\ell}$ with reserve price $0.5\ell$.  \hma{I'm confused. Quantities can be converted to Cliff contracts with no problem. If people only bid on quantity, we can't really infer their cost?} \rmr{you know their cost is below the reserve price, so you can still assign them to this contract as if they bid the reserve price 0.5}
Second, it becomes obvious that DR-SCE is just a very restricted version of the more general DR-VCG, where parameter values are arbitrary and most likely suboptimal. \hma{might also be confusing. We has just said that it's equivalent to single agent VCG's so the whole DR-SCE is not equivalent to a single VCG. We can say that contracts used in DR-SCE are very restricted?} By setting the proper reduction target and reserve prices, we expect DR-VCG to outperform DR-SCE. In particular:
\begin{enumerate}[labelindent=0pt]
	\item DR-SCE makes no informed selection. With an explicit reduction goal $M$ (based on the actual surge prediction), DR-VCG \textbf{selects agents} who are more reliable. Thus we expect that $P^{VCG} >  P^{SCE}$ in most cases.
		\item DR-VCG pays rewards based on \textbf{competition}. In fact, for $J^{SCE}$ and any $M$ and $S$, $TE^{VCG}\leq TE^{SCE}$ \hma{expense notation. Also here's the same problem if we consider this a theoretical bound, as the beginning of the section 4. We can say empirically the average expense of VCG would be lower than that of SCE?}. \rmr{this claim is trivial, but it does not imply that VCG actually pays less since S may differ}
		\item DR-VCG allows agents to bid on \textbf{multiple contracts}, so they can reveal more information on their type.
		\item DR-SCE uses arbitrary  price of $0.5m$ for contracts of size $m$, whereas DR-VCG is \textbf{flexible}. In particular we may use the actual costs of generating $m$ kWh, which are highly non-linear due to the cost of adding another generator. \rmr{The average cost of producing 1 kWh with gas turbine was $\sim \$0.0332$ at 2015 (\url{http://www.eia.gov/electricity/annual/html/epa_08_04.html}), which is much less than $\$0.5$. However costs are likely to be non-linear: each gas turbine generates hundreds of MWh, while the expected surge $M$ may be just a few MWh. Thus the cost of the outside option (keeping an additional turbine running) looks like a step function, where $R_m$ is very low  until reaching production limit, and then leaps. The marginal cost could even be $\$10$ per kWh or more.  }
	\vrb{Correct, these are sometimes called "`hockey stick"' prices.}
	\end{enumerate}
\begin{exmp}\label{ex:3_bidders_SCE}
Consider the same 3 agents from Example~\ref{ex:3_bidders_VCG}. 
In the DR-SCE mechanism, all agents will submit a bid of $b_i=100$, and the grid cannot distinguish between them. 
If it selects two of them at random, it pays 
 $TE^{SCE}=(50+45+35)\cdot\frac23 = 83.33$---much more than $TE^{VCG}=25$.

If we compare failure probabilities, then $\ol P^{SCE}=\frac13 0.1 + \frac13 0.3 + \frac13 (1-0.9\cdot 0.7) = 0.224$, whereas 
 $\ol P^{VCG}=0.1$, which is again an improvement over SCE.  
\end{exmp}
On the other hand, if agents have to invest high costs $c^*_i$ then they might not participate in DR-SCE at all, as their reward is bounded by \$0.5. 
\rmr{
 Suppose now that each agent has to invest $c^*_i=\$60$, where $p_i$ remains the same. Then the rewards under DR-VCG would increase to $r_1=r_2=75$, or $150$ (minus $5$) in total. This may seem high, but note that under DR-SCE, the agents will not bid at all! 
The example demonstrates
}
Thus DR-SCE pays \emph{too much} to agents with low $c^*_i$, and \emph{too little} to agents with high $c^*_i$. \hma{hmm but on average these agents are more expensive than 0.5 per unit... need to incentivize why the mechanism should pay more to more expensive agents for the turbine scheme? Maybe add a section 5.2 non-linear reserve costs or something and move this there and create a simple example? (if time and space, or course, and it seems we don't have any...)}

The parameters of the Cliff contract $j_\ell$ are also arbitrary (e.g. why $\alpha=\frac13$?), however there is no obvious way to set them a-priori (see Discussion). 
\rmr{I think we can prove that as long as not all agents are needed then the power company saves a substantial amount: once since it only selects the required agents, and second time because they will be paid much less than the reserve price. In the worst case all agents are selected and then the cost is exactly as in the current mechanism.
} \hma{will do and should be easy} \rmr{I actually have a counter example for TE(VCG)<TE(SCE)  when both use selection: suppose that agents 1+2 can cut 100 w.p. 1, and agent 3 can cut 200 w.p. 0.5. Thus the SCE bids are 100,100, and 200 (kWh) and the VCG bids are 0,0,and 50 (\$). Then VCG selects $\{1,2\}$ and pays 50 to each (100 in total). SCE select $\{1,2\}$ or $\{3\}$ at random, and pays either 100 in the first case, or just 50 in the latter case, and 75 in expectation. Note however that this is because SCE is much less reliable. Note that this does not contradict the statement about  $TE^{VCG}(S)\leq TE^{SCE}(S)$ above, since the mechanisms select different sets $S$. }

\subsection{Simulations}  

\paragraph{Settings} Each agent $i$ has $T$ ``effort levels,'' where each level is a triple $(c_{it},q_{it},p_{i})$, meaning that with investment $c_{it}$ agent $i$ can reduce $q_{it}$ kWh.  The reduction succeeds w.p. $p_{i}$, and w.p. $1-p_i$ reduction is 0 due to an unexpected event. The expected demand surge is $M$, and the grid uses a safety margin $\gamma \geq 1$. 
For each mechanism we denote by $TE=TE(S^*(N,J,\gamma M))$ the total expense
, and by $P =P(S^*(N,J,\gamma M),M)$ the reliability, i.e., the probability that $\sum X_i >M$ when the mechanism collects contracts for $\gamma M$. We run both DR-SCE and DR-VCG mechanisms, and measure $TE^{SCE},TE^{VCG},P^{SCE}$, and $P^{VCG}$. 
 
To set up a realistic scenario of a typical demand response event, we used \cite{DBP_eval} that summarize previous DR programs. We fix the expected demand surge to $M=10$MWh. In each economy we sample $n$ agents  i.i.d., where each agent has $T\in \{1,3,5\}$ effort levels. For each agent $i\in N$ and effort level $t\leq T$: the capacity (in kWh) is $q_{it}\sim Zipf(1,500) \cdot 10$; individual reliability is $p_{i}\sim U[0.7,1]$\rmr{we can also use a skewed distribution here, with many reliable agents and few ``rotten apples''. This will only improve VCG}; and agents' investment costs (in \$) are $c_{it}\sim U[0.2,1]$, multiplied by $q_i$.\footnote{This roughly mimics the aggregate statistics in the data, where agents' bids are highly skewed, with a minimum of $10$ kWh up to several MWh, overall reliability is $\sim 0.85$, and participation is low (about 100 bidders out of 1000 registered users). We also tried different distributions and got similar results.}
 Note that only agents with $\max_t \frac{c_{it}}{q_{it}}\leq 0.5$ will submit bids in DR-SCE. We generated populations of 3 sizes: $n=100$ (small), $n=200$ (medium) and $n=400$ (large), and for each population varied the safety margin between $\gamma\in[1,2]$. \hma{I think it's okay to mention that in the doc there are 100 agents that had submitted bids, and when $n=400$ roughly this number of people actually bid in SCE given the above setup? Or we can move the footnote to the end of this paragraph, adding this piece of information in there?} 

We run simulations that demonstrate the four advantages of the DR-VCG mechanism mentioned above. Every datapoint in our simulations is an average over 100 instances. 
\newpar{Selection and Competition}
In our first simulation, agents each have a single effort level ($T=1$). 
We use the set of ``SCE-like'' contracts $J^{SCE}=\{j_\ell\}_{\ell=10,20,\ldots}$, and set linear reserve prices $R_m=0.5m$ for all $m\geq 0$. Thus for a single bidder, DR-SCE and DR-VCG are equivalent by Prop.~\ref{prop:SCE_is_VCG}. 

Fig.~\ref{fig:large_econ} shows the expense-vs.-reliability frontier under both mechanisms.  Larger safety margin $\gamma$ results in more recruited agents and higher reliability, but also higher costs\hma{total expenses?}. We can see that in both populations DR-VCG dominates DR-SCE by guaranteeing \emph{any} reliability level at a much lower cost. \hma{total expenses?}  We found that even if we pay the reserve prices to all selected agents, DR-VCG does somewhat better than DR-SCE, meaning that it does indeed select better agents.  
\rmr{With fixed costs the reliability of DR-VCG is even better, as bids are better proxies for agents' individual reliability.}

Fig.~\ref{fig:large_econ} also shows that the advantage of DR-VCG becomes larger in large populations (or when the expected surge is small), as competition drives prices down. In contrast, in small populations DR-VCG and DR-SCE are the same, as both exhaust all agents with low investment costs. Our next simulations show how the other two advantages of DR-VCG overcome this problem.

%
%
%
\if 0
\begin{figure}
\caption{\label{fig:large_econ}
Reliability vs. Expense for medium (left) and large (right) population. Solid lines trace the reliability frontiers for DR-SCE and DR-VCG with SCE contracts. The dashed line is DR-VCG with double penalties.\vspace{-0mm}}
\end{figure}
\fi

\begin{figure}[t!]
\centering     
\subfloat[\small{$n=400$}]{\label{fig:fig1a_vcg_sce_n400}
	\includegraphics[scale=0.69]{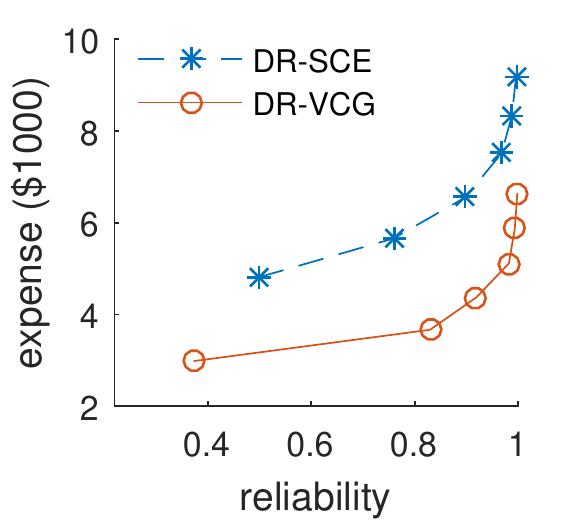}
}
\hspace{1em}
\subfloat[\small{$n=200$}]{\label{fig:fig1b_vcg_sce_n200}
  \includegraphics[scale=0.69]{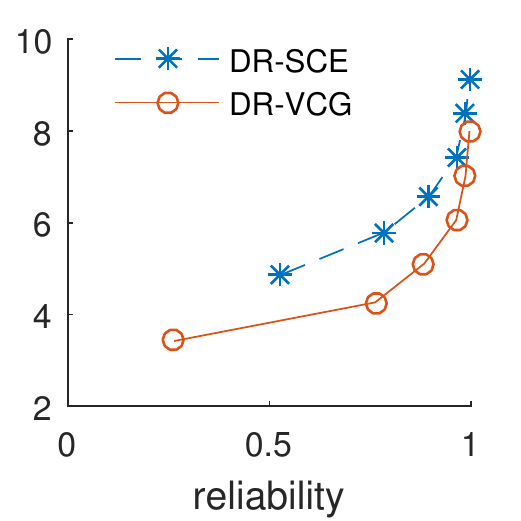}

}
\vspace{-0.5em}
\caption{Reliability vs. expense for large and medium populations and single effort level. 
\label{fig:large_econ}}\vspace{-0.0em}
\end{figure}

\newpar{Multiple levels} 
Fig.~\ref{fig:multilevel} shows the \hma{add ``expense-"}reliability frontier for a medium population ($n=200$) of agents with multiple effort levels. We can see from  Figs.~\ref{fig:large_econ} and \ref{fig:multilevel} that DR-VCG performance becomes better as population gets larger and/or agents' types are more complex, wheres the performance of DR-SCE remains almost the same. Intuitively, selecting from $n$ agents each submitting $T$ independent bids is similar to selecting from $Tn$ agents, i.e. there is more competition. 
\rmr{removed this since actually not true with fixed $p_i$ levels. I suspect VCG takes the smaller levels $q_it$ with lower marginal cost, however I don't really know --- We can also think of DR-VCG as incentivizing agents to invest more effort, and }

\if 0
\begin{figure}\label{fig:multilevel}
\includegraphics[scale=0.42]{figures/turbine_prices_M1000_n400-800_T1.png}
\caption{Reliability vs. expense for small population with $T=3$ (left), and $T=5$ (right).\vspace{-0mm}}
\end{figure}
\fi

\begin{figure}[t!]
\centering     
\subfloat[\small{$T=3$}]{\label{fig:fig2a_vcg_sce_T3}
  \includegraphics[scale=0.69]{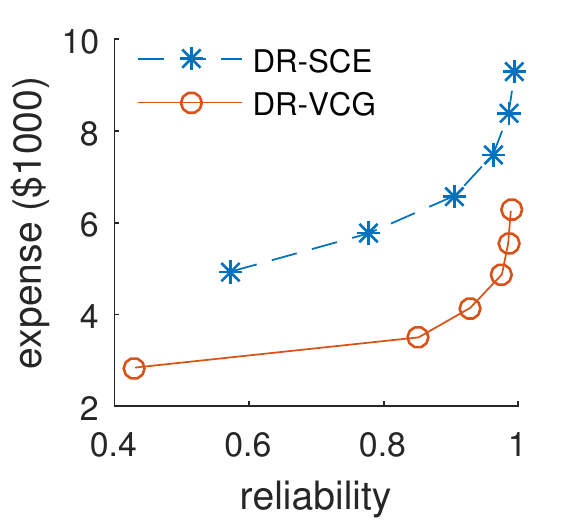}
}
\hspace{1em}
\subfloat[\small{$T=5$}]{\label{fig:fig2b_vcg_sce_T5}
  \includegraphics[scale=0.69]{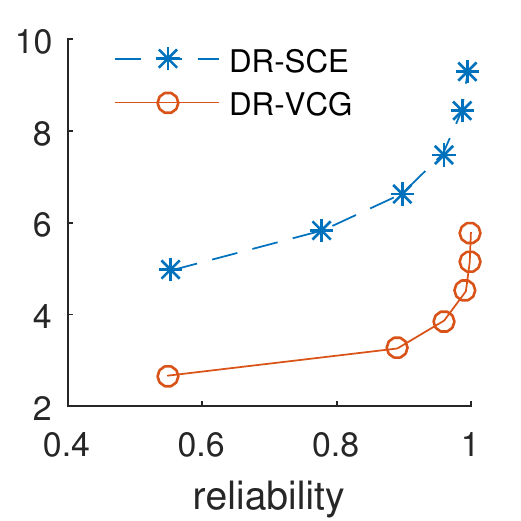}
}
\vspace{-0.5em}
\caption{Reliability vs. expense with for medium population  $n=200$, with multiple effort levels. \hma{simplified the captions a little bit... unfortunately the gap didn't increase significantly from 3 to 5, but we can mention that this is big decrease in costs in comparison to 2(b)?}
\label{fig:multilevel}}\vspace{-0.0em}
\end{figure}

\newpar{Flexible reserve prices}
The SCE contracts with their linear reserve price do not reflect correctly the outside options available to the grid. In reality, the grid cannot generate additional power at small quantities to fill the gaps between agents' bids and the reduction goal. Failure to reach the reduction goal $\gamma M$ means that the grid cannot rely on the current DR contracts, and must increase supply by operating another generator at a large cost.  The operating cost with modern gas turbines is $\$0.04-0.1$ per kWh, 
but each turbine generates at least $100$ MWh. We thus set the reserve prices to $R_m=4000+0.1m$, which creates a dichotomy between ``success'' (where the DR mechanism collected enough contracts to forgo the additional turbine) and ``failure.'' Increasing the reserve prices also requires higher penalties. Otherwise agents may bid for contracts they do not plan to keep, with reward higher than the maximal penalty. We did not optimize the contract (see Discussion) and instead just set the penalties to  $f_\ell = \ell$ (double from $j_\ell$). 
									\rmr{Reviewers may ask why we took the lower bound for the fixed cost and the upper bound for the additional cost. We can say that (a) this way remains closest to the previous linear reserve prices, (b) the simulations never gets to reserve prices anyway so different parameters would not change anything}

Fig.~\ref{fig:turbine} shows how flexible prices benefit DR-VCG in \emph{small populations}. As the target capacity $\gamma M$ increases, this requires high reliability using a small population, which DR-SCE very often fails to achieve. This is because it may not find enough reliable agents willing to bid for a payment of $\$0.5$, and it must use the extra turbine for a high cost. In contrast, DR-VCG can increase the reward to agents, thereby attracting also bidders with investment costs higher than $\$0.5$. 

\if 0

\begin{figure}\label{fig:turbine}
\vspace{-0mm}
\includegraphics[scale=0.42]{figures/turbine_prices_M1000_n200_T1.png}
\caption{The left figure shows the fraction of instances in which each mechanism failed to meet the required capacity for $n=200$. On the right we see the reliability frontier.\vspace{-0mm}}
\end{figure}
\fi

\begin{figure}[t!]
\centering     
\subfloat[\small{frac. of failure}]{\label{fig:fig3a_vcg_sce_n100}
\includegraphics[scale=0.69]{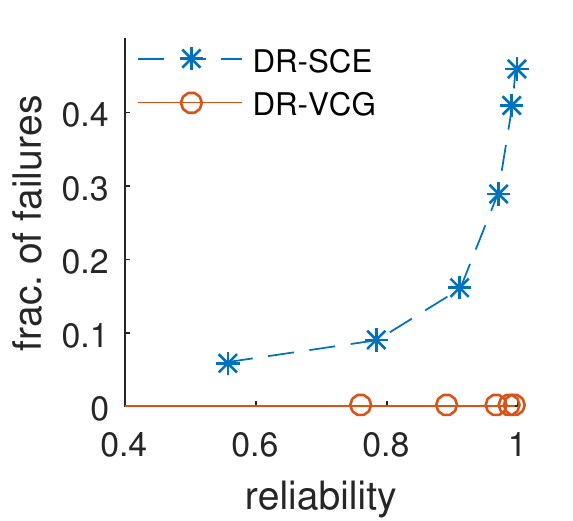}
}
\subfloat[\small{expense$\sim$reliability}]{\label{fig:fig3b_vcg_sce_n100}
	\includegraphics[scale=0.69]{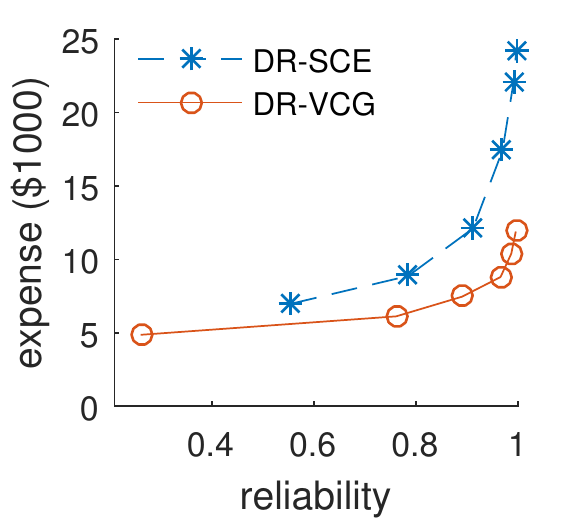}
}
\vspace{-0.5em}
\caption{Outcomes under SCE and VCG mechanism for small population ($n=100$) and flexible reserve prices. 
\label{fig:turbine}}\vspace{-0.5em}
\end{figure}

\vspace{-0mm}
\section{Discussion}
\vspace{-0mm}
We suggested in this paper the DR-VCG mechanism for demand-response contracts that is based on individual ``soft'' commitments and flexible penalty schemes.
While the details of the contracts and the analysis were specific to demand response programs, the general idea of offering flexible penalty contracts to multiple agents may be useful in other domains that require joint effort under uncertainty~\cite{porter2008fault}.   

 We considered three natural parametric classes of penalties that allow for efficient computation of VCG prices, and showed how they generalize the currently deployed SCE contract. Power companies can adopt the new DR-VCG mechanism with the SCE contracts for painless migration at first. Then, they can gradually add more contracts and optimize their parameters based on distributional assumptions, data on consumption profiles, trial-and-error, and so on. Another benefit of DR-VCG is that the grid can focus on optimizing the set of contracts without worrying about agents' strategic behavior, whereas agents can focus on accurately estimating their costs for each contract. Based on initial simulations, it seems that penalties should be higher than $\$0.5$ per kWh, and perhaps superlinear in the size of the contract (as reliability of large consumers is more important).
Finding  optimal parameters under various assumptions is a major topic for future research, as well as a better understanding of the connections between reliability, penalties, and expenses.


\ver{}{
\clearpage
}
\bibliographystyle{named}
\bibliography{ijcai17-dr-refs}

\clearpage
\onecolumn
\appendix

\section{Variants of DR mechanisms}
\label{apx:variants}
\paragraph{Agent-side variant}
\rmr{Omit from full version?}
This variant is similar to the auction variant, only instead of bidding on some fixed set of contracts $J$, the grid publishes a parametric penalty scheme $F$. Each agent submits a finite number of bids, where each bid contains a contract $j$ (i.e. its parameters) and a cost $c_{ij}$. For example, the grid publishes the fixed scheme $f_\ell = \ell/2$ as above. Agent~1 submits two bids, $J_1=\{(5,8),(7,10)\}$ (i.e.  the agent asks \$8 to reduce consumption by $5$ units, or \$10 to reduce 7 units), agent~2 submits the bids $J_2=\{(5,6), (10,12),(100,20)\}$, and so on. The grid then runs the auction mechanism with $J=\bigcup_{i\in N}J_i$. 

This mechanism leaves the decision of what contracts to bid on for the agent, thereby allowing them to submit fewer bids on goals that are convenient to them.

\paragraph{Direct revelation variant}
The grid publishes a parametric penalty scheme $F$, as in the agent-side variant. Then each agent reports her entire cost function which maps every possible contract  to a cost under $F$. For example, the grid may publish the fixed penalty scheme $f_\ell = \ell/2$. Then agents will submit their reported cost functions $C_i(\ell)$ in some concise form. 
The mechanism then chooses for each agent $i$ a contract $\ell^*_i$ from the (infinite) set $\{(\ell,C_i(\ell))\}_{\ell\geq 0}$ (and the null contract $(0,0)$), such that  $S^*(N)=\{(i,\ell^*_i)\}_{i\in N}$ minimizes the total cost $\sum_{i\in N}C_i(\ell^*_i)$ among all valid contracts.

This version of the mechanism is the most demanding one: both for the agents who should come up with a function describing their cost for any possible contract; and for the grid that has the burden of optimizing over an infinite set of contracts. On the other hand, since the set of contracts in this version is the largest one, the outcome is better in terms of social cost.

\section{Proofs}
\label{apx:proofs}
\subsection{Computational Complexity}
\begin{proposition}Checking whether $SB^*(N,J)\leq Z$ for some input $N,J,Z$ is NP-hard, even when $J$ includes only fixed contracts.
\end{proposition}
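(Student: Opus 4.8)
The plan is to reduce from \textsc{Subset Sum} --- the NP-complete problem Karp~\cite{karp1972reducibility} calls \textsc{Knapsack}: given positive integers $a_1,\dots,a_n$ and a target $t$, decide whether some $S' \subseteq \{1,\dots,n\}$ has $\sum_{i\in S'} a_i = t$. (We may assume $\sum_i a_i \ge t$, since otherwise the instance is trivially a ``no''.) Computing an optimal contract set is, at heart, a covering / min-knapsack problem, so this is the natural source of hardness.

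Given a \textsc{Subset Sum} instance I would build the following DR-VCG input using only fixed contracts. Take $N=\{1,\dots,n\}$ and $k=n$ fixed contracts, where contract $j$ has commitment goal $\ell_j = a_j$ and fixed penalty $f_{\ell_j}=a_j$; the penalty is immaterial, since only the bid matrix enters the optimization $\min_{S\in\calS(M)} SB(S)$. Set the bids to $B_{jj}=a_j$ for all $j$ and $B_{ij}=t+1$ for all $i\ne j$, set the reduction goal $M=t$, and set the threshold $Z=t$. Every number written down is bounded by a polynomial in the input size, so the reduction runs in polynomial time.

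For correctness, consider any valid set $S\in\calS(M)$ with $SB(S)\le Z=t$. Because every off-diagonal bid equals $t+1>Z$ and all bids are nonnegative, $S$ can contain only ``diagonal'' pairs $(i,i)$; letting $S'=\{i:(i,i)\in S\}$, the at-most-one-contract-per-agent condition makes distinct agents sit on distinct contracts automatically, the feasibility constraint $\sum_{(i,j)\in S}\ell_j\ge M$ becomes $\sum_{i\in S'} a_i\ge t$, and $SB(S)=\sum_{i\in S'} a_i$. Hence a valid $S$ of bid-sum at most $t$ exists iff some $S'$ satisfies $t\le \sum_{i\in S'}a_i\le t$, i.e. $\sum_{i\in S'}a_i=t$. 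Thus $SB^*(N,J,M)\le Z$ exactly when the \textsc{Subset Sum} instance is a ``yes'' instance, which gives NP-hardness; and since a subset $S$ serves as a poly-time-checkable witness, the problem is in fact NP-complete.

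I do not anticipate a genuine obstacle. The one subtlety is to prevent ``clever'' feasible sets that place an agent on a foreign contract, or let several agents share one contract to pad the $\ge M$ slack cheaply; the uniform off-diagonal bid $t+1$ kills exactly those options for any solution of value $\le Z$. A minor bookkeeping point is that $M$ (equivalently, the choice $\calS=\calS(M)$) must be regarded as part of the input; if one wants a single fixed family $\calS$, the same construction works after folding $M=t$ into the encoding.
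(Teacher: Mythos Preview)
Your proof is correct and takes essentially the same approach as the paper's: reduce from a knapsack-type problem by creating one contract per item, with each agent bidding cheaply only on ``its own'' contract and prohibitively on all others. The only differences are cosmetic --- the paper reduces from the general \textsc{Knapsack} (separate volumes $v_i$ and worths $w_i$, so $\ell_{j_i}=w_i$, $B_{i,j_i}=v_i$, and off-diagonal bids $\infty$), whereas you reduce from \textsc{Subset Sum} (the $v_i=w_i=a_i$ special case, which is in fact what Karp called \textsc{Knapsack}) and use the finite off-diagonal value $t+1$.
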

\begin{proof}
Proof is by a  reduction from the Knapsack problem~\cite{karp1972reducibility}.
Given a Knapsack instance $\{(v_i,w_i)\}_{i\in N}$ (volume, worth), we define for each item $i\in N$  a fixed contract $j_i=(w_i,1)$, and an agent $i$  such that  $B_{i, j_i}=v_i$ and $B_{i,j}=\infty$ for any $j\neq j_i$. Then $SB^*(N,J)\leq Z$ if and only if there is a set of items of total worth $M$ that fit in a sack of size $Z$.
\end{proof}

\begin{rtheorem}{th:complex} For any sets of agents $N$ and Cliff contracts $J$, both of $S^*(N,J)$ and $SB^*(N,J)$  (and thus also VCG prices) can be computed in time polynomial in $n,k,M$. 
\end{rtheorem}
 		
		\begin{proof}
		We provide a dynamic program that decides in poly time whether $SB^*(N,J)\geq Z$. We can then solve the optimization problem by doing binary search on the value of $Z$. 
		
		Fix an arbitrary order over agents in $N$, and initialize tables $S$ and $SB$, each of size $(n+1)\times M$. 
		
		\begin{enumerate}
			\item The cell $S(0,m)$ is initialized to $\emptyset$ for $1 \leq m \leq M$.
		\item The cell $SB(0,m)$ is initialized to zero for $m=0$ and to infinity for $1 \leq m \leq M$.
		\item The cells $S(k,m), SB(k,m)$ will contain the optimal subset  $S\subset\{1,\ldots,k\}\times J$  s.t. $\sum_{(i,j)\in S}\ell_{ij} = m$, and the sum of bids in of $S(k,m)$, respectively. 
		\end{enumerate}
		
		Given $S(k-1,m)$ for all $m\leq M$, we can compute $S(k,m)$ for all $m\leq M$, by considering the best option to meet $m$ without agent $k$ ($SB(k-1,m)$), and all possible contracts of agent $k$, i.e. $S(k-1,m-\ell_j)\cup \{(k,j)\}$ for each $j\in J$. We take the best  solution from all $|J|+1$ options.
		
		More formally, let
		$$j^* = \argmin_{j\in J} SB(k-1,m-\ell_j) + B_{k,j},~~~~ s^* = \min_{j\in J} SB(k-1,m-\ell_j) + B_{k,j}.$$
		If $s^*\geq SB(k-1,m)$ then set $S(k,m) = S(k-1,m)$ and $SB(k,m) = SB(k-1,m)$. Otherwise, set $S(k,m) = S(k-1,m-\ell_{j^*}) \cup \{(k,j^*)\}$ and $SB(k,m) = SB(k-1,m-\ell_{j^*}) + B_{k,j^*}$.
		
		Finally, We consider all feasible solutions ($S(n,m)$ for $m\geq M$) and select the one with the minimal cost $SB(n,m)$ among them. 	
		\end{proof}

\subsection{Incentives}
\begin{rtheorem}{th:auction}
Consider the auction variant of the DR-VCG mechanism.
\begin{enumerate}
	\item For every contract $j\in J$, it is a dominant strategy for agent $i$ to bid   $C^*_{i}({j})$;
	\item If contract $(i,j)$ is selected, it is a dominant strategy for $i$ to invest $c^*_{i}(j)$ in preparation; 
	\item The mechanism is IR.	
\end{enumerate}
 
\end{rtheorem} 
 \begin{proof}
First, suppose  contract $(i,j)$ was selected. Thus $i$'s expected utility for investing $c$ is $v_i(c,j)=r_i-c-F_i({j},c)$.  By definition, this is maximized by investing $c^*_i({j})$. Thus we can assume that agents indeed invest the effort on which they base their overall true cost $C^*_i(j)$.


Let $i^j$ be an agent that attaches $0$ cost to contract $j$, and infinite cost to all other contracts.
It is easy to show that the mechanism can be interpreted as follows: \rmr{perhaps cite some similar paper?}
\begin{itemize}
	\item ``cost-independent prices": for each agent $i$ and each contract $j$, the mechanism offers a payment:
\begin{align*}
	t_i^{j} = SB^*( N_{-i}) - SB^*_{-i}(N_{-i}\cup\{i^j\}) 
\end{align*}
where $SB^*_{-i}(N_{-i}\cup\{i^j\})$ is the social cost on the rest of the agents, when agent $i$ (or $i^j$) gets contract $j$.
	\item ``agent-maximizing selection": each agent selects the utility-maximizing contract from the set of offered contracts, unless all contracts yield negative utility.
\end{itemize}
To see why, suppose that  $i$ strictly prefers contract $j'$ over contract $j$. Then 
$$t_i^j-C^*_i({j}) < t_i^{j'}-C^*_i({j'})$$
then  \hma{Some notation problem in the following inequality since the direction should be reversed --- I'm too brain dead at this moment to check...}\rmr{now ok?}
\begin{align*}
SB^*(N) &= SB^*_{-i}(N_{-i}\cup\{i^j\}) + C^*_i(j) = (SB^*(N_{-i})-t_i^j) + C^*_i(j) =SB^*(N_{-i})-(t_i^j - C^*_i(j)) \\
& > SB^*(N_{-i})-(t_i^{j'} - C^*_i(j')) = SB^*_{-i}(N_{-i}\cup\{i^{j'}\}) + C^*_i(j') = SB(S'),
%
%
\end{align*}
That is, the mechanism would prefer an alternative assignment $S'$ where $i$ is assigned to contract ${j'}$ rather than $j$. 
Since $t_i^j$ are independent of  agent $i$'s reported costs, it is a dominant strategy to report the true expected cost $C^*_i(j)$. It is left to show that the mechanism is IR, i.e. that agent $i$ selects a contract only if this guarantees a nonnegative utility.  
\begin{itemize}
		\item If $(i,j) \in S^*(N)$, $SB^*( N_{-i}) \geq SB^*(N)$ 
		must hold. Therefore, 
		$$t_i^j - C^*_i(j) = SB^*( N_{-i}) - SB^*_{-i}(N_{-i} \cup \{i^j\}) 
		- C^*_i(j) = SB^*( N_{-i}) - SB^*(N) \geq 0.$$
		\item If $i \notin S^*(N)$, then $SB^*( N_{-i}) =SB^*_{-i}(N)$, thus for all $j\in J$, $t_i^j - C^*_i(j) = 
			0- C^*_i(j) \leq 0$.
	\end{itemize}
\end{proof}

The theorem has the following corollaries on the other two variants we presented:
\begin{enumerate}
	\item The direct revelation mechanism is truthful, as it is a dominant strategy for an agent to report her full cost type. This is simply by considering $J$ as the set of all contracts. \hma{Not necessarily. I think we can still fix the same set of intervals but just do the math for the agents with their reported types}
	\item For the agent-side variant, we do not specify how the agent selects on which contracts to bid: indeed, there may not be a dominant strategy for an agent to select contracts. Yet on any set of reported contracts, the agent will bid her true costs, and the mechanism will select the optimal subset from all reported contracts. To see why, just suppose that the agent first suggests the set of contracts to the grid, and then the grid includes them in the published set $J$.
	\hma{I think it's a bit problematic here. The agent needs to consider all possible intervals, do some math and figure out which one is the most profitable to report. This will depend on the report of the rest of the agents so it's not clear what the strategy of the agents would be like.} \rmr{better now?} 
\end{enumerate}

\subsection{Reliability Guarantees}
\begin{rproposition}{th:cliff_M}
Suppose that the set of possible contracts $J$ is composed of cliff penalty contracts of the form $(\ell_j,f,\alpha,\beta^j)$ for some fixed $f$ and $\alpha$ (same for all contracts), then 
$\ol P(S,\alpha \cdot M) \leq \frac{1}{f} SB(S)$.  This bound is tight.
\end{rproposition}
\begin{proof}
\begin{align*}
SB(S)&=\sum_{(i,j)\in S}C^*_i(j) = \sum_{(i,j)\in S}c^*_i(j) +  \sum_{(i,j)\in S}E[F_i(j)] \\
& \geq  \sum_{(i,j)\in S}E[F_i(j)] \geq  \sum_{(i,j)\in S}f_{\ell_j}\cdot  Pr(X_i < \alpha^j \ell_j) \\
\end{align*}
For fixed $f_\ell = f$ and $\alpha^j=\alpha$, 
\begin{align*}
& = f \cdot  \sum_{(i,j)\in S}\cdot  Pr(X_i < \alpha \ell_j) \geq f\cdot  Pr(\sum_{(i,j)\in S} X_i < \alpha\sum_{(i,j)\in S}  \ell_j ) \\
&\geq  f\cdot Pr(\sum_{(i,j)\in S} X_i < \alpha M), 
\end{align*}
\end{proof}

\subsection{DR-SCE implemented by DR-VCG} 

For any $\ell>0$, we define a cliff contract $j_\ell = (\ell,f_\ell = \frac{\ell}{2},\alpha = \frac13, \beta = \frac12)$.

\begin{rproposition}{prop:SCE_is_VCG}
For any agent $i$ of type $p_i$, submitting optimal bid $b_i$ to the DR-SCE mechanism is ex-post equivalent to being the only bidder in DR-VCG with $M\geq b_i$, $J^{SCE} \hspace{-0.2em}=\{j_\ell\}_{\ell\geq 0}$, and reserve prices $R_m=m/2$ for all $m$.  
\end{rproposition}
\begin{proof}
The reward to the agent for contract $j_\ell$ is determined by the reserve cost, thus  $r_i=R_\ell = \ell/2$.
 The penalty function  $\tilde F$ has the following form:
$\tilde F(j_\ell,X)=\left\{
\begin{array}{ll}
\ell/2, &X<\ell/3 \\
(\ell-X)/2,  & \ell/3  \leq X < \ell \\ 
0 ,& \ell \leq X
\end{array}
\right.
 $
For any bid $b$, we identify a corresponding contract $j=j(b)$, such that $\ell_j=\frac{3b}{2}$ (the argument $b$ is omitted when clear from context).
We will show that for any realization of $X_i$, the utility of $i$ is the same under the DR-SCE (with bid $b_i$) and under the DR-VCG mechanism with the truthful bid $C^*_{i}(j)$ on contract $j=j(b)$.  Denote by $r^{SCE}_i(b,X)$ the realized reward to an agent in the SCE mechanism that bids $b$ and reduces $X$. 
Note that there are 3 cases: 
\begin{enumerate}
	\item If $X_i<b_i/2$ then $r^{SCE}_i(b_i,X_i)=0$. In this case it also holds that $X_i<\frac{3b_i}{3\cdot 2} = \frac{\ell_j}{3}$, and thus   the penalty is $F(j(b_i),X_i)=\ell_j/2$. Thus the reward minus penalty in DR-VCG is $r_i-F(j(b_i),X_i) = \ell_j/2-\ell_j/2 = 0= r^{SCE}_i(b_i,X_i)$.
	\item If $b_i/2  \leq X_i < 3b_i/2$, then $r^{SCE}_i(b_i,X_i)=X_i/2$. In this case $\ell_j/3  \leq X_i < \ell_j$ so the total payment is $r_i - \tilde F(j,X_i) = \ell_j/2 - (\ell_j-X_i)/2 = X_i/2 = r^{SCE}_i(b_i,X_i)$. 
	\item If $3b_i/2\leq X_i$, then $r^{SCE}_i(b_i,X_i)=3b_i/4$. In this case $\ell_j \leq X_i$ so the total payment is $r_i- F(j,X_i) = \ell_j/2 - 0 = 3b_i/4 = r^{SCE}_i(b_i,X_i)$.
\end{enumerate}

If there is a preparation cost, it is the same cost under both mechanisms, thus in either case the outcome is completely equivalent.

It is left to show that under both mechanisms agent $i$ ends up with the same contract.
In the DR-SCE mechanism, the bidder selects $b^*_i$ which maximizes  \hma{here, $b$ becomes $j$ after the first equality, which might be a bit confusing. How about just change the $b$'s to be $j$'s? Or we can write $b_i^\ast = \arg\max_{b \geq 0} \{ E[r^{SCE}_i(b,X)]-c^*_{ib} \} = \arg\max_{j \geq 0} \{  E[r_i-F(j,X)]-c^*_{i}(j) \}$? }
$$E[r^{SCE}_i(b,X)]-c^*_{ib}=E[r_i-F(j,X)]-c^*_{i}(j) = \ell_j/2-(E[F(j,X)]+c^*_{i}(j)).$$
 DR-VCG, on the other hand, minimizes the sum of bids that reach $M$, where the gap between the unique selected contract $\ell_j$ and $M$ is filled by the reserve.  I.e. $S^*=\{(i,j^*)\}$, where
\begin{align*}
j^*&=\argmin_j\{C^*_{i}(j)+R_{M-\ell_j}\}= \argmin_j\{E[F(j,X)]+c^*_{i}(j)+R_{M-\ell_j}\}\\
&= \argmin_j\{E[F(j,X)]+c^*_{i}(j)+\frac{M-\ell_j}{2}\}=\argmin_j\{E[F(j,X)]+c^*_{i}(j)-\frac{\ell_j}{2}\},
\end{align*}
which is the same as $\argmax_j\{\ell_j/2-(E[F(j,X)]+c^*_{i}(j))\}$. Thus $j^*=j(b^*_i)$. 
In other words,  the contract $j^*$ assigned to $i$ in DR-VCG is exactly the one corresponding to bid $b^*_i$ in the DR-SCE mechanism.
%
\end{proof}

As we showed that every SCE contract has an equivalent contract that can be oferred by DR-VCG, we denote this set of contracts by $J^{SCE}$. Consider some assignment $S$ of contracts to agents. 
\begin{proposition} For any subset of contracts $S\subseteq N \times J^{SCE}$, with $R_m=m/2$ for all $m$. Then
$TE^{VCG}(S)\leq TE^{SCE}(S)$.
\end{proposition}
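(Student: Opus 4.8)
The plan is to reduce the claimed inequality to a per-agent comparison and then invoke two facts already available in the paper: the reward bound that comes from linear reserve costs, and the realization-by-realization equivalence between an SCE contract and its Cliff counterpart $j_\ell$ established in the proof of Proposition~\ref{prop:SCE_is_VCG}.

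First I would unpack the two total-expense quantities for a fixed selection $S$. Writing each contract in $S$ as $j=j_\ell$ with $\ell=\ell_j=\tfrac32 b_i$ (the corresponding SCE bid of $i$), I have $TE^{SCE}(S)=\sum_{(i,j)\in S} E[r^{SCE}_i(b_i,X_i)]$, where the expectation is over $X_i\sim p_i(c^*_i(j))$; the point here is that by the equivalence in Proposition~\ref{prop:SCE_is_VCG} the optimal preparation effort, and hence the distribution of the realized reduction $X_i$, is the same under both mechanisms, so the SCE and VCG summands for agent $i$ concern the same random variable. On the VCG side, $TE^{VCG}(S)=\sum_{(i,j)\in S}\bigl(r_i-EF_i(j,c^*_i(j))\bigr)$, with $r_i$ the VCG reward.

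Next I would apply the reward bound~(II) from the fallback-options discussion: with $R_m=m/2$ every selected agent $(i,j)$ satisfies $r_i\le R_{M-L}-R_{M-(L+\ell_j)}$ with $L=\sum_{j'\in S\setminus\{j\}}\ell_{j'}$. Since $S$ is a valid set, $L+\ell_j\ge M$, so $R_{M-(L+\ell_j)}=0$, and $R_{M-L}\le \ell_j/2$ by linearity of $R$ together with $\ell_j\ge M-L$ (and $R_{M-L}=0$ if $L\ge M$). Hence $r_i\le \ell_j/2$ for every $(i,j)\in S$, which gives $TE^{VCG}(S)\le\sum_{(i,j)\in S}\bigl(\ell_j/2-EF_i(j,c^*_i(j))\bigr)$.

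Finally I would identify this summand with the SCE expense of agent $i$. From the three-case analysis in the proof of Proposition~\ref{prop:SCE_is_VCG}, for every realization $r^{SCE}_i(b_i,X_i)=\ell_j/2-\tilde F(j_\ell,X_i)$; taking expectations over $X_i$ gives $\ell_j/2-EF_i(j,c^*_i(j))=E[r^{SCE}_i(b_i,X_i)]$. Summing over $(i,j)\in S$ yields $TE^{VCG}(S)\le\sum_{(i,j)\in S}E[r^{SCE}_i(b_i,X_i)]=TE^{SCE}(S)$. The only genuinely delicate step is establishing $r_i\le\ell_j/2$ uniformly over the selected set — in particular handling the boundary case where the remaining contracts already cover $M$, so the reserve contributes nothing and the bound degenerates to $r_i=0$; the rest is bookkeeping on top of results already proved.
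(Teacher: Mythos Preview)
Your proposal is correct and follows essentially the same route as the paper: reduce to a per-agent comparison, observe that the optimal effort (and hence the distribution of $X_i$) coincides under both mechanisms, identify the SCE expense of agent $i$ as $f_\ell-EF_i=\ell_j/2-EF_i$, and bound the VCG expense by $r_i-EF_i\le \ell_j/2-EF_i$. The only difference is expository: the paper simply asserts $r_i\le \ell/2$ in one line (implicitly from the reserve prices), whereas you derive it explicitly from reward bound~(II), and you spell out the SCE-expense identity via the three-case analysis of Proposition~\ref{prop:SCE_is_VCG}; your ``delicate boundary case'' is therefore less delicate than you suggest.
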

\begin{proof}
Note that an agent $i$ assigned to contract $j$ will invest the same effort $c^*_i(j)$ under both mechanisms, and hence will have the same reduction in expectation (and even ex-post). Denote by $EF_i=E_{X_i\sim p_i(c^*_i(j))}[F(j,X_i)]$ the expected penalty of agent $i$ in DR-VCG.

\rmr{we need to extract this as a lemma and use for both propositions}
Each agent $i$ assigned to contract $j_\ell$ earns $f_\ell-EF_i$ in DR-SCE.
The same agent in DR-VCG earns $r_i - EF_i \leq \ell/2-EF_i = f_\ell-EF_i$.

As the total expense in each mechanism is the sum of payments, DR-SCE pays weakly more.
\end{proof}
We highlight that the proposition \emph{does not entail} that DR-VCG always pays less. This is since for the same population, DR-SCE and DR-VCG may assign different sets of contracts.
 
\section{A Numeric Example with continuous intervals} \label{apx:example}\hma{I wasn't able to walk through the numbers...}

To demonstrate both our mechanism and its advantage over the current system, we consider a simple numeric example with only two agents (and no preparation costs).  We only present some details here, and the rest can be found in Appendix~\ref{sec:example}.

Suppose there are two agents with the following capacity distributions at cost $0$:  $X_1\sim U[100,200]$ and $X_2\sim U[50,250]$. Thus both agents have the same capacity in expectation but agent~1 is more reliable.  
The goal of the grid is to cut $M=150$ KW. 

Let us consider first agents' bids under  the DR-SCE mechanism (recall that the price is $\$0.5$ per KW). The dominant strategy for agent~1 is to bid anything s.t. $[100,200]\subseteq [b_1/2,3b_1/2]$, thus say $b_1=150$. For agent~2 bidding $b_2 = 150$ is unique. The expected reward to  agent~1 will be $r_1= 0.5\int_{X=100}^{200}X\frac{1}{100}dX=75$ ($0.5\$$ times the expected capacity). For agent~2, note that $b_2/2=75, 3b_2/2=225$, thus $r_2 = 0.5\int_{X=75}^{225}X\frac{1}{200}dX + 0.5\int_{X=225}^{250}225\frac{1}{200}dX \cong 70.3$. Therefore, if the grid takes both contracts, it pays about $\$145.3$ in expectation. The grid may also choose to assign a single contract and reduce the expense to $\$70-75$, however this means that the reduction goal of $M=150$ is only met w.p. of 50\% (regardless of which agent is selected).  

Next, suppose that the grid uses the DR-VCG mechanism. 
 Available contracts are cliff contracts with the SCE penalty scheme, i.e. for each $\ell$, a contract of the form $(\ell,f_\ell=\frac{\ell}{2},\alpha=\frac13,\beta=\frac12)$. 
Thus the expected penalty (and thus cost) for an agent $i$ on contract $j_\ell$ is $C^*_i(\ell) = \int_{X=0}^{\ell/3}\frac{\ell}{2}p_i(X) dX+  \int_{X=\ell/3}^{\ell}\frac{\ell-X}{2}p_i(X) dX$.
For example, for agent~1, 
$$C^*_1(\ell) =  \left\{\begin{array}{ll}
0 &, \ell \leq 100 \\
\frac{(\ell-100)^2}{200}  &, 100\leq \ell \leq 200 \\
\geq 50 &, 200 \leq \ell
\end{array} \right. $$
 
%

Suppose for simplicity that contracts are available only on multiples of 50~kW. Then agents report the following costs:
$$\begin{array}{l|llll}
\ell_i & 0 &  50  & 100 & 150  \\
\hline
C^*_1  & 0 &  0   & 0   & 12.5  \\
C^*_2  & 0 &  0   & 6.25  & 25  \\
\end{array}$$
Note that agent~2 reports higher costs, because he is less reliable and thus expects higher penalties. 
The cheapest combination of contracts for the grid is to assign 100 units to agent~1, and 50 units to agent~2. This will result in rewards of $r_1=C^*_2(150)-C^*_2(50) = 25-0=25$, and $r_2 =  C^*_1(150)-C^*_1(100) = 12.5-0=12.5$. Thus the total expense for the grid is $25+12.5=37.5$, which is less than what the DR-SCE mechanism pays a single agent.

In this example, the agents always meet the reduction goal. In the appendix, we show how payments and reliability vary under both mechanisms as we increase the reduction goal $M$.

\rmr{It should be easy to create an example where the improvement over the current mechanism is much more impressive}

\subsection{A Numeric Example expanded} \label{sec:example} \hma{I wasn't able to walk through the numbers...}

To demonstrate both our mechanism and its advantage over the current system, we consider a simple numeric example with only two agents (and no preparation costs), varying the reduction goal.  
\paragraph{Two agents with a linear penalty scheme}
Suppose there are two agents with the following capacity distributions at cost $0$:  $X_1\sim U[100,200]$ and $X_2\sim U[50,250]$. Thus both agents have the same capacity in expectation but agent~1 is more reliable.  

For the power company, $\lambda=0$ and $C_m=m$ for any $m$. Available contracts are cliff contracts of the form $(\ell,f_\ell=0,\alpha=0,\beta=1)$, which are simply linear penalties for hitting below the goal: $F(\ell,X) = [\ell-X]_+$. Thus the expected penalty (and thus cost) for an agent reporting $\ell_i$ is $C^*_i(\ell_i) =  \int_{X=0}^{\ell_i}(\ell_i-X)p_i(X) dX$.

For agent~1, 
$$C^*_1(\ell_1) =  \left\{\begin{array}{ll}
0 &, \ell_1 \leq 100 \\
\frac{(\ell_1-100)^2}{200}  &, 100\leq \ell_1 \leq 200 \\
\ell_1-150 &, 200 \leq \ell_1
\end{array} \right. $$

For agent~2, 
$$C^*_2(\ell_2) =  \left\{\begin{array}{ll}
0 &, \ell_2 \leq 50 \\
\frac{(\ell_2-50)^2}{400}  &, 50\leq \ell_2 \leq 250 \\
\ell_2-150 &, 250 \leq \ell_2
\end{array} \right. $$

Suppose for simplicity that agents only bid in multiples of 50. Then they report the following costs:
$$\begin{array}{l|llllllll}
\ell_i & 0 &  50  & 100 & 150 & 200   & 250 & 300 & 350\\
\hline
C^*_1  & 0 &  0   & 0   & 12.5 & 50   & 100 & 150 & 200\\
C^*_2  & 0 &  0   & 6.25  & 25 & 56.25 & 100 & 150 & 200\\
C_m    & 0  & 25  & 50    & 75 & 100   & 125 & 150 & 175 \\ 
\end{array}$$

We get the following solution for increasing values of $M$:
\begin{table}
$$\begin{array}{l|llllllll}
M &       50   & 100   & 150     & 200    & 250               & 300          &400  & 1000 \\
\hline
S(N,J) &  \{1\} & \{1\}& \{1,2\} & \{1,2\} &  \{1,2\}          & \{1,2\}     & \{1,2\} + 100 & \{1,2\} + 700\\
SC(N,J) & 0     & 0    &    0    &  6.25  &  18.75             &  37.5       & 87.5         & 387.5        \\  
\ell_1  &  50   & 100   & 100    &  100    &  150              & 150        & 150        &  150     \\ 
\ell_2  &  -    & -     & 50     &  100     & 100              &  150       & 150         & 150 \\
r_1 &     0     & 6.25 &  25 & 50-6.25   & 75-6.25          & 100-25     & 150 - 75 &    75  \\
   &            &      &         &  =43.75  & =68.75           & =75      &= 75   &  \\
r_2 &      -   &    -  &  12.5  &  37.5     &  62.5-12.5      & 87.5-12.5     & 137.5 - 62.5 & 75 \\
    &           &      &         &         &    =50           & =75           & = 75  & \\
r_1+r_2+C &  0   & 6.25  &  37.5   & 81.25  &  118.75           & 150     &  150+50 & 150+350\\ 
\hline
E[\text{grid expense}] & 0 & 6.25 & 37.5 & 75 & 100      &      112.5     &  162.5 & 462.5 \\   
\hline 
Pr(\sum X_i \geq M)        & 1  &1& 1    & 0.94   & 0.75        &  0.5    &  0.5      &  0.5 \\
Pr(\sum X_i \geq \frac34 M )& 1 &1& 1    & 1    & 0.96          &  0.86    & 0.94     &  1 \\  
Pr(\sum X_i \geq \frac12 M )& 1 &1& 1    & 1    & 1            &  1    & 1      &  1 \\  
\end{array}$$
\caption{\label{tab:linear}Each column shows the results for a different reduction goal between $50$ and $1000$. The rows show: the selected set (+additional production); the social cost; the individual reduction commitments; the rewards to selected agents; the total grid expense (total rewards + external cost - expected penalties); and the probability to meet the reduction goal or part of it. }
\end{table}
The grid expense is the total reward paid, plus the cost of the outside option, minus the penalties paid by the agents (note that since is no preparation cost, the sum of expected penalties equals the social cost).
The success probabilities are computed under the assumption that $X_i$ are independent.  

We can see that the best thing for the grid is to let each agent commit to up to 150 units, where more units go the more reliable agent~1. When this is exhausted the grid uses its outside option to complete the missing power. We can see that the total cost for the grid (rewards + outside option) is always at most $C_M$, and substantially lower for small $M$. 
What if use the current penalty scheme rather than linear penalties? since the agents never commit to more than $\ell_i=150$, and $X_o \geq 50\geq \ell_i/3$, the current penalty scheme would never hit the `cliff' and hence for this instance it would give identical outcome. 
 
In contrast, suppose that the grid uses the DR-SCE mechanism. The dominant strategy for agent~1 is to bid anything s.t. $[100,200]\subseteq [b_1/2,3b_1/2]$, thus say $b_1=150$, and for agent~2 $b_2 = 150$ is unique. The expected reward to each selected agent will be $r_1=75$ ($0.5\$$ times the expected capacity) and $r_2\equiv 70$. Suppose the grid selects agents at random until it meets the reduction goal $M$ and only then uses its outside option.\footnote{The current system in fact recruits all agents regardless of the reduction goal so is even less efficient.} 
\begin{table}
$$\begin{array}{l|llllllll}
M &                     50    &     100     & 150  &   200    & 250      & 300          &400            & 1000 \\
\hline
|S| &                 1    &       1      &  1    &  2       & 2   &   2        &   2 (+100)        & 2 (+ 700) \\
E[\text{grid expense}] 
                     &  72.5     &   72.5 &  72.5 &  145      & 145  &    145     & 195         & 495  \\ 
\hline
Pr(\sum X_i \geq M)     & 1      & 0.875  &  0.5    &   0.94  &  0.75    & 0.5          & 0.5             & 0.5              \\
Pr(\sum X_i \geq \frac34 M )&  1 &  0.937 &  0.78    &     1  &  0.96   & 0.86       &     0.94              &  1   \\  
Pr(\sum X_i \geq \frac12 M )&  1 &  1    &  0.937   &     1   &   1   &  1        &         1        &      1 \\  
\end{array}$$
\caption{\label{tab:current} Outcome of the current PG\&E mechanism for various reduction goals.}
\end{table}
We can see that the current mechanism is always at least as expensive to the grid as using our proposed mechanism with the current penalty scheme, and the difference is substantial for $M\leq 400$. For $M\leq 150$ the current mechanism is also less reliable, since it relies on a single agent while our mechanism signs contracts with both agents (and at a lower cost). Note that this improvement is attained without modifying or optimizing the penalty scheme.

When $M$ is very high then both mechanisms rely mainly on the outside option and thus there is not much difference between them.

\rmr{It should be easy to create an example where the improvement over the current mechanism is much more impressive}

\if 0

\paragraph{Forgiving mechanisms}
Consider a simplified version of the agent-side variant of the mechanism, where every agent is required to submit a single interval. Note that while the reported cost is always truthful,  there is no ``truthful'' interval to report since the value is uncertain and might fall outside the interval. Feige and Tennenholtz~\cite{feige2011mechanism} suggest an alternative approach of ``forgiving mechanisms,'' where after realization, each agent incurs a penalty based on her report and her realized value. This penalty is characterized in terms of desired axioms: (a) that the agent is best off when her realized value matches the reported one, and utility decreases as the distance grows; (b) that the loss of utility is symmetric (overshooting and undershooting the reported value has the same effect); and (c) that the loss of utility is minimal under these conditions. The provide forgiving mechanism for scheduling problems with uncertainty, which guarantee that at least when for all agents the realized value equals the reported one, the performance is good. However, they simply postulate that such mechanisms will direct the participating agents towards more accurate reports (and thus better performance) without specifying a formal model of how this should occur.  

Our interval bidding mechanisms are inspired by a similar reasoning. A penalty scheme guarantees that the agent is best of when $X_i$ falls inside the reported interval $[\ell_i,u_i]$, and may be subject to other normative or practical constraints. \emph{If} $X_i\in [\ell_i,u_i]$ for all agents, then indeed the reduction target $M$ is attained. \hma{It's a bit tricky to claim this here, since the $M$ needs to be set to be something that is learned over time and not the real $M$} Which interval the agent will choose to report depends not just on the penalty scheme and her type, but also on her own beliefs and assumptions. Thus instead of a detailed analysis of agents' types and optimization of the penalty scheme, we can focus on defining a scheme with desirable axiomatic properties, and let the agents figure out the best interval (or intervals) to bid. E.g. we can require that penalties be symmetric, linear, and so on. 

\fi

\end{document}